\documentclass[twocolumn]{autart}    
\usepackage{natbib}
\usepackage{graphicx}          
\usepackage[utf8]{inputenc}  
\usepackage[T1]{fontenc}
\usepackage{textcomp}
\usepackage{comment}

\setlength{\parskip}{0em} 
\usepackage{subfigure}
 
\usepackage{wrapfig}

\usepackage{xcolor}
\usepackage{xparse}
\usepackage{mathrsfs}
\usepackage{graphics} 
\usepackage{amsmath} 
\usepackage{amssymb}  

\usepackage{algorithm}
\floatname{algorithm}{\bf{Algorithm}}

\newenvironment{proof}[1][Proof]{\smallskip\par\noindent\textbf{#1.} }{\hfill$\square$\par \smallskip}

\usepackage{enumitem} 
\usepackage{hyperref}%
\hypersetup{colorlinks=true,  linkcolor=blue,  breaklinks=true,  urlcolor=blue,  citecolor=blue}

\usepackage{lipsum}
\usepackage{mwe}
\usepackage{soul}
\usepackage{url}
\DeclareSymbolFont{bbold}{U}{bbold}{m}{n}
\DeclareSymbolFontAlphabet{\mathbbold}{bbold}
\newcommand{\vect}[1]{\mathbbold{#1}}





\newtheorem{theorem}{Theorem}
\newtheorem{remark}[theorem]{Remark}
\newtheorem{example}[theorem]{Example}
\newtheorem{lemma}[theorem]{Lemma}
\newtheorem{definition}[theorem]{Definition}
\newtheorem{proposition}[theorem]{Proposition}
\newtheorem{problem}{Problem}

\newcommand{\x}{{\bf x}}
\newcommand{\ub}{{\bf u}}
\newcommand{\y}{{\bf y}}
\newcommand{\Lc}{{\mathcal L}}

\newcommand{\V}{{\mathcal V}}
\newcommand{\Ec}{{\mathcal E}}

\newcommand {\be}{\begin{equation}}
\newcommand {\ee}{\end{equation}}
 
\begin{document}

\begin{frontmatter}
 
\title{\LARGE\bf  Data-based control of Logical Networks}


\thanks[footnoteinfo]{Corresponding Author.
}

\author[Padova]{Giorgia Disar\protect{\`o}}\ead{disarogior@dei.unipd.it},    
\author[Padova]{Maria Elena Valcher\thanksref{footnoteinfo}}\ead{meme@dei.unipd.it}              

\address[Padova]{ Dipartimento di Ingegneria dell'Informazione
 Universit\`a di Padova, 
    via Gradenigo 6B, 35131 Padova, Italy}  

%

\begin{keyword}
Boolean control networks; safe control; output regulation. 
\end{keyword}

\begin{abstract}
  
 In recent years, data-driven approaches have become increasingly  pervasive across all areas of control engineering. However, the applications of data-based techniques to Boolean control networks (BCNs) are still very limited. 
 In this paper we 
  aim to fill this gap, by exploring the possibility of {\color{black} evaluating some basic features, i.e., reachability and equilibria, and  of solving two} fundamental control problems, i.e., safe control  and output regulation, for a BCN, leveraging only a limited amount of data  
  generated by the network, without knowing or identifying its  model.
 \end{abstract}

\end{frontmatter}

\section{Introduction} \label{intro}

 Boolean networks (BNs) are discrete-time (autonomous) dynamical systems whose describing variables are constrained to take values in  
the Boolean set. When their evolution is controlled by an external signal (an input), whose entries in turn take Boolean values,  they are called Boolean control networks (BCNs).
The interest in this class of logical systems originated in the 1960s, when Stuart Kauffman introduced them to model and analyze gene regulatory networks \cite{Kauffman}, and even if this 
may still be regarded as their most successful application area,   they have been adopted   in a wide range of fields, including systems biology, computer science, game theory, smart cities and network theory (see, e.g., \cite{born08,ChengGames,Green,SmartHome,FabioHealth}). 
 
The algebraic approach 
to BNs and BCNs,
proposed by Cheng et al. \cite{in_state_approach_bn,lin-rep-dyn-boolean,bn_statespace,BCNCheng},  and deeply relying on the concept of semi-tensor product (STP) \cite{STP2001}, 
has  made it possible to  represent BNs and BCNs by means of discrete-time state space models  in which the state, input and output are logical vectors, and the describing matrices are logical matrices.
In this way, concepts and tools developed for linear time-invariant state space models have been  adjusted to formalize
and solve  control problems for BNs and BCNs   (see, e.g., \cite{ACC2014_BCN,LiYangChu2013,ZhangZhang}).

 Despite the simplifying  ``on/off" (``active/inactive",  ``high/low") logic at the basis of BCNs,
the size and complexity of
the  physical systems  they model (e.g., gene regulatory networks)
 make it difficult to derive an accurate
model of the network, as well as to obtain it by means of identification techniques.
 On the other hand, in recent times {\em direct} data-driven methods   have  gained interest in the control community, given their 
attractive feature of bypassing the identification phase,  to directly design the problem solution based on raw data. 
Yet, the use of partial data to solve control problems for logical networks is still at an early stage. 
 The first attempt in this direction has been made in \cite{BCNdd}, where a one-step data-driven approach to solve the output stabilization problem is proposed. More specifically, the authors in \cite{BCNdd} first reconstruct from the available data the output prediction matrix that maps the current input/output pair to the set of all possible next outputs, and then, based on this map, design an output feedback law to stabilize the output of the BCN to a desired value. However, in general, the set of all possible next outputs contains more than one element and hence the output prediction is possibly non-deterministic. {\color{black} To deal with this uncertainty, the authors draw inspiration from results on the stabilization of Probabilistic Boolean networks (PBNs).}
 
{\color{black} The authors in \cite{Li_BCNDD}, instead,  exploit the informativity framework, introduced by van Waarde and coauthors in \cite{informativity} (see, also, \cite{info_magazine}), to investigate BCN control problems in a data-driven set-up.} The informativity framework is based on the idea that, when data are not informative enough to uniquely identify the system, the best we can do is impose that all the systems compatible with the collected data (among which there is surely also our target system) possess the desired property or behave in a certain way. {\color{black} In \cite{Li_BCNDD}  fundamental system properties such as controllability and stabilizability are investigated, and  the problems of state feedback stabilization and guaranteed cost control are addressed for BCNs. For both problems, necessary and sufficient conditions under which the data are informative for control design are provided.} 

In this paper,  we also exploit this approach to address the limitations arising from having only a limited amount of data available. More specifically, we solve two fundamental control problems, namely safe control   and output regulation of a BCN via state feedback, by solving them for all logical systems compatible with the available data. This automatically guarantees that the problem is solved for the actual BCN, even under partial information.
We provide necessary and sufficient conditions on the collected data to drive the evolution of a target BCN towards a safe set, and to set its output to a desired value, by means of a state feedback. Moreover,  for each of these problems, we   propose a   procedure to design such a feedback matrix based on the collected data. 
{\color{black} In order to  solve these problems, we   preliminarily investigate
under what conditions the reachability of a set of states  and the existence of equilibrium points can be inferred, for all the BCNs compatible with the data, directly from the  data themselves, since these notions prove to be instrumental in constructing the solutions. 
}

The paper is organized as follows. {\color{black} In Section \ref{prel} we briefly introduce Boolean networks and Boolean control networks. 
In Section \ref{datacoll_infoid} we explain the data collection process and we provide the notion of informativity for identifiability, along with its characterization. Section \ref{reach} contains an informativity-based characterization of the reachability property for a BCN. Section \ref{eqpts} reviews the notion of equilibrium points in BCNs and provides a systematic approach to identifying all equilibrium points consistent with the data.
In Section \ref{sec.SC} we formalize and solve the safe control problem for a BCN using only finite sequences of data. In Section \ref{p2}  the data-driven output regulation problem by state feedback for a BCN is investigated.} Section \ref{concl} concludes the paper. 
\smallskip

{\bf Notation.}  Given two nonnegative integers $k$ and $n$, with $k\le n$, we denote by $[k,n]$ the set of integers $\{k,k+1, \dots, n\}$. 
We consider Boolean vectors and matrices whose entries take values in $\mathcal B \triangleq \{0,1\}$, equipped with the usual logical operations, namely sum (OR)  $\vee$, product (AND) $\land$, and negation (NOT) $\neg$.  We denote by $\delta_k^i$ the $k$-dimensional $i$-th canonical vector, by $\mathcal L_k$ the set of all $k$-dimensional canonical vectors, and by $\mathcal L_{k\times q} \subset \mathcal B^{k\times q}$ the set of all $k\times q$ logical matrices, namely matrices whose $q$ columns are canonical vectors of size $k$. Any matrix $L\in \mathcal L_{k\times q}$ can be described as $L = [\delta_k^{i_1} \ \delta_k^{i_2} \ \dots \ \delta_k^{i_q}]$, for some indices $i_1,i_2,\dots, i_q\in [1,k]$.   $I_n$ denotes the $n$-dimensional identity matrix. 
The $k$-dimensional vector with {\color{black} all unitary (all zero) entries is denoted by $\vect{1}_k$ (by $\vect{0}_k$)}, and the $k\times l$ matrix with 
all zero entries by $\vect{0}_{k\times l}$. 
{\color{black} When clear from the context or unnecessary, the suffixes are omitted.} The $(i,j)$-th entry of a matrix $M$ is denoted by $[M]_{ij}$, and the $j$-th entry of a vector ${\bf v}$ by $[{\bf v}]_j$. The $j$-th column of matrix $M$ is denoted by ${\rm col}_j(M)$. We denote by $\odot$ the {\em Hadamard (or entry-wise) product}. 
{\color{black}  Given two Boolean matrices $M_1$ and $M_2$ of the same size, condition $M_1 \le M_2$ means that $[M_1]_{ij} \le [M_2]_{ij}$ for every $i,j$. Given a  vector  ${\bf v}\in {\mathbb R}^k$, we denote by ${\rm diag}({\bf v}^\top)$ the $k\times k$ diagonal matrix whose $(i,i)$-th entry is $[{\bf v}]_i$. }
A Boolean (in particular, a logical) square matrix $L$ of size $k$ 
is {\em reducible} if there exists a permutation matrix $P\in \mathcal L_{k\times k}$ such that
$$P^\top L P = \begin{bmatrix} L_{11} & L_{12}\cr \vect{0} & L_{22}\end{bmatrix},$$
where $L_{11}$ and $L_{22}$ are square Boolean (logical) matrices.
Otherwise, $L$ is  called {\em irreducible}.
$L$ is   irreducible if and only if
$I_k \vee L \vee \dots \vee L^{k-1}$ has all unitary entries.
 \\
There is a bijective correspondence between Boolean variables $X\in \mathcal B$ and vectors $\x \in \mathcal L_2$, defined by $\x = [X \ \neg X]^\top$. 
Given  
 $A \in \mathbb R^{m\times n}$ and $B\in \mathbb R^{p\times q}$,   their  {\em (left) semi-tensor product} ($\ltimes$) (see \cite{STP2001}) is
$$
A \ltimes B \triangleq (A \otimes I_{l/n})(B\otimes I_{l/p}), \quad l \triangleq {\rm l.c.m.}\{n,p\},
$$
{\color{black} where the symbol $\otimes$ denotes the {\em Kronecker product}.}
For the properties of the semi-tensor product, we refer to \cite{BCNCheng}.\\
{\color{black} Given two matrices $C\in {\mathcal B}^{m\times n}$ and $D\in {\mathcal B}^{p\times n}$ with the same number of columns, we denote by $*$ their {\em Khatri-Rao product}, which is defined as a column-wise semi-tensor product, or, equivalently, a column-wise Kronecker product, i.e.,
\begin{eqnarray*}
C*D &=& 
\left[\begin{array}{c|c|c} {\rm col}_1(C) \ltimes {\rm col}_1(D) & \dots & {\rm col}_n(C) \ltimes {\rm col}_n(D)\end{array}\right] \\
&=& \left[\begin{array}{c|c|c} {\rm col}_1(C) \otimes {\rm col}_1(D) & \dots & {\rm col}_n(C) \otimes {\rm col}_n(D)\end{array}\right]. 
\end{eqnarray*}
}
By exploiting the semi-tensor product, it is possible to extend the previously mentioned bijective correspondence to a bijective correspondence between $\mathcal B^n$ and $\mathcal L_{2^n}$, as follows.  Given a vector $X = [X_1 \ X_2 \ \dots \ X_n]^\top \in \mathcal B^n$, set 
$$\x \triangleq \begin{bmatrix} X_1 \\ \neg X_1\end{bmatrix} \ltimes \begin{bmatrix} X_2 \\ \neg X_2\end{bmatrix} \ltimes \dots \ltimes \begin{bmatrix} X_n \\ \neg X_n\end{bmatrix}.$$
It is possible to associate with a Boolean (in particular, a logical) matrix $L \in {\mathcal B}^{k\times k}$ a {\em directed graph} (or, {\em digraph}), $\mathcal D(L) = (\mathcal V, \mathcal E)$, where $\V =[1,k]$ is the set of nodes and $\Ec$ is
 the set of edges. An edge $(j,l)$, from $j$ to $l$, belongs to $\mathcal E$ if and only if $[L]_{lj} = 1$. 
A sequence $j_1 \to j_2 \to \dots \to j_r \to j_{r+1}$ in $\mathcal D(L)$ is a {\em path} of length $r$ from $j_1$ to $j_{r+1}$ if $(j_1,j_2), \dots, (j_r,j_{r+1})$ are   
edges in $\mathcal E$. A closed path is a {\em cycle}. 
A node $j^*\in {\mathcal V}$ is said to be {\em globally reachable} if there exists a path to $j^*$ from any other node in the network. 
A digraph $\mathcal D(L)$  is {\em strongly connected} if all its nodes are globally reachable, and this is the case if and  only if
 $L$ is irreducible. 
\bigskip

{\color{black} \section{Introduction to Boolean control networks} \label{prel}}
 
 A {\em Boolean control network (BCN)} is a logical system described by the following equations
\begin{subequations} \label{bacon}
\begin{eqnarray}
X(t+1) &=& f(X(t), U(t)),   \label{bcn}\\ 
Y(t) &=& h(X(t)), \qquad \qquad t \in \mathbb Z_+, \label{bcno}
\end{eqnarray}
\end{subequations}
where $X(t)\in \mathcal B^n$ is the $n$-dimensional state variable, $U(t)\in \mathcal B^m$ is the $m$-dimensional input, and 
$Y(t)\in \mathcal B^p$ is the $p$-dimensional output  at time $t$. $f$ and $h$  are logical functions, i.e., $f: \mathcal B^n \times\mathcal B^m \to \mathcal B^n$, while $h: \mathcal B^n \to \mathcal B^p$. 
If the logical system is autonomous, namely no input acts on it, then the BCN becomes a {\em Boolean network} (BN) and equation \eqref{bcn} becomes
\be \label{bn} 
X(t+1) = f(X(t)), \quad t \in \mathbb Z_+, 
\ee
where 
$f: \mathcal B^n \to \mathcal B^n$ is a  logical function.
By exploiting the bijective correspondence between Boolean  and logical vectors, 
the BN \eqref{bn} 
 can be equivalently expressed via its {\em algebraic representation} \cite{BCNCheng} as 
 \be
 \label{bna} 
 \x (t+1) = L \x (t), \quad t\in \mathbb Z_+,
\ee
where $\x(t) \in \mathcal L_{N}$ and $L\in {\mathcal L}_{N\times N}$, with $N \triangleq 2^n$.
Similarly, the algebraic representation  of  
the BCN  \eqref{bacon} is
 \begin{subequations}\label{BCNtot}
\begin{eqnarray} 
 \label{bcnA} 
\x (t+1) &=& L \ltimes \ub(t) \ltimes \x(t), 
 \quad t\in \mathbb Z_+,\\
 \y(t) &=& H\x(t), \label{bcnAo}
 \end{eqnarray}
 \end{subequations}
where $\x(t) \in \mathcal L_{N}$, $\ub(t) \in \mathcal L_{M}$, $\y(t) \in \mathcal L_{P}$, $L\in \mathcal L_{N\times NM}$ and $H\in {\mathcal L}_{P\times N}$, with $N \triangleq 2^n$, $M \triangleq 2^m$ and $P\triangleq 2^p$.
It is worth remarking that {\color{black}   model \eqref{BCNtot} can be used to represent any state-space model whose state, input and output variables take values in sets of finite cardinalities $N$, $M$ and $P$, respectively, and hence} all the subsequent analysis holds for generic nonnegative integer values of $N$, $M$ and $P$, not necessarily powers of $2$. 
The matrix $L\in \mathcal L_{N\times NM}$, whose columns are canonical vectors of size $N$, can be divided into $M$ square blocks of size $N$ as follows 
\be \label{Lblock}
L = \begin{bmatrix}
L_1 &\vline& L_2 &\vline& \dots &\vline& L_M
\end{bmatrix},
\ee
where each block $L_i\in \mathcal L_{N\times N}, i \in [1,M]$, describes the behavior of   the $i$-th subsystem of the BCN, i.e., the Boolean network obtained when $\ub(t) = \delta_M^i, \forall t \in \mathbb Z_+$:
\be \label{Li}
\x (t+1) = L_i \x (t), \quad t\in \mathbb Z_+.
\ee 


{\color{black} \section{Data collection process and informativity for identifiability of BCNs} \label{datacoll_infoid}} 

We assume to have performed {\color{black} some  offline experiments\footnote{{\color{black} It is worth noticing that the number and the lengths of the experiments will play no role in the following analysis.} {\color{black} However, a higher number of experiments  may allow  to obtain information that could not be collected in a single one.}}, say $r\ge 1$, during which we have collected   state/input/output data from the BCN \eqref{BCNtot} on finite time intervals $[0,T_i], T_i \in \mathbb Z_+, i \in [1,r]$. }
{\color{black} We define the   vector sequences $\x_d^i \triangleq \{\x_d^i(t)\}_{t = 0}^{T_i}$, $\ub_d^i \triangleq \{\ub_d^i(t)\}_{t = 0}^{T_i-1},$ and $\y_d^i \triangleq \{\y_d^i(t)\}_{t = 0}^{T_i-1}, i \in [1,r]$, and accordingly $\x_d \triangleq \{\x_d^i\}_{i=1}^r$, $\ub_d \triangleq \{\ub_d^i\}_{i=1}^r$ and $\y_d \triangleq \{\y_d^i\}_{i = 1}^{r}$.} 
{\color{black} We rearrange the  data collected during the $r$ experiments into the following logical matrices with $T \triangleq \sum_{i = 1}^{r}{T_i}$ columns:
\begin{eqnarray*} \label{data} 
X_p &\triangleq& \left[\begin{array}{ccc|ccc}
\x_d^1(0) & \dots & \x_d^1(T_1-1) & \x_d^2(0) & \dots & \x_d^r(T_r-1)
\end{array}\right], 
\label{Xp}\\
X_f &\triangleq& \left[\begin{array}{ccc|ccc}
\x_d^1(1) & \dots & \x_d^1(T_1) & \x_d^2(1) & \dots & \x_d^r(T_r)
\end{array}\right],    
\label{Xf}\\
U_p &\triangleq& \left[\begin{array}{ccc|ccc}
\ub_d^1(0) & \dots & \ub_d^1(T_1-1) & \ub_d^2(0) & \dots & \ub_d^r(T_r-1)
\end{array}\right],    
\label{Up} \\
Y_p &\triangleq& \left[\begin{array}{ccc|ccc}
\y_d^1(0) & \dots & \y_d^1(T_1-1) & \y_d^2(0) & \dots & \y_d^r(T_r-1)
\end{array}\right],    
\label{Yp}
\end{eqnarray*}
where the subscripts $p$ and $f$ stand for past and future, respectively. 
 }

 {\color{black} As every BCN \eqref{BCNtot} bijectively corresponds to its describing matrices, we define the {\em set of Boolean control networks \eqref{BCNtot} compatible with the data} $(\x_d$, $\ub_d, \y_d)$ as 
 \begin{align} \label{compBCNo}
\!\!\!\!\mathcal B_d \!& \triangleq  \big\{(\tilde L,\tilde H)\in \mathcal L_{N\times NM} \times \mathcal L_{P\times N} : \nonumber\\
\!\! &\!\!\! \x_d^i(t+1) = \tilde L \ltimes \ub_d^i(t)\ltimes \x_d^i(t), \
 \y_d^i(t) = \tilde H \x_d^i(t), \nonumber\\
 \!\! &\!\!\!\forall t \in [0,T_i-1], \forall i \in [1,r] 
\big\}.  \!\!\!
\end{align}
Obviously, the pair $(L,H)$   corresponding to the BCN  \eqref{BCNtot} 
that generated the data belongs to $\mathcal B_d$.
However, ${\mathcal B}_d$ may also include additional matrix pairs when the available data do not uniquely determine $(L,H)$. 
To this end, we introduce the notion of {\em informativity for identifiability}\footnote{{\color{black} 
As a matter of fact,
what we call informativity for {\em identifiability} is referred to as informativity for {\em model reconstructability} in
Definition 3.3 of \cite{Li_BCNDD}.  
}}, that extends Definition 3.3 in \cite{Li_BCNDD}   to  BCNs  as in \eqref{BCNtot}, which also include an output equation. 
}
 \smallskip
 
\begin{definition} \label{info_id_def_o}
 The  data $(\x_d, \ub_d, \y_d)$ are called {\em informative for identifiability} of the BCN \eqref{BCNtot} if ${\mathcal B}_d \equiv \{(L,H)\}$. 
\end{definition}
\smallskip

{\color{black} Despite the presence of the output equation, the characterization of informativity for identifiability is the same as the one given in Proposition 3.4 of \cite{Li_BCNDD} for BCNs described only by \eqref{bcnA}. Lemma \ref{info_id} proves such equivalence and provides  an additional  equivalent condition.} 
\smallskip

 \begin{lemma} \label{info_id}
{\color{black} The following facts are equivalent. 
\begin{itemize}
    \item[i)]
The data $(\x_d$, $\ub_d, \y_d)$ are informative for identifiability of the BCN \eqref{BCNtot}.
\item[ii)] For every $(i,j) \in [1,M]\times [1,N]$, there exists $k\in[1,T]$ such that
\be
\begin{bmatrix} 
X_p \cr 
U_p 
\end{bmatrix} \delta_T^k = 
\begin{bmatrix}
\delta_N^j \cr 
\delta_M^i
\end{bmatrix}.
\label{ijk}
\ee
\item[iii)] The   
logical matrix {\color{black} $U_p*X_p$
has no zero rows (equivalently,} is of full row rank). 
\end{itemize}
}
\end{lemma}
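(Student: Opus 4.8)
The plan is to establish the two equivalences ii) $\Leftrightarrow$ iii) and i) $\Leftrightarrow$ iii), both resting on a single structural observation about how each recorded transition constrains a candidate pair $(\tilde L,\tilde H)\in\mathcal B_d$. The key computation is that, for every $\tilde L\in\mathcal L_{N\times NM}$ and all indices $i\in[1,M]$, $j\in[1,N]$,
\[
\tilde L\ltimes\delta_M^i\ltimes\delta_N^j \;=\; \tilde L\bigl(\delta_M^i\otimes\delta_N^j\bigr) \;=\; \tilde L\,\delta_{NM}^{(i-1)N+j} \;=\; {\rm col}_j(\tilde L_i).
\]
Hence a single datum with $\ub_d(t)=\delta_M^i$ and $\x_d(t)=\delta_N^j$ pins down exactly one column, namely ${\rm col}_j(\tilde L_i)$, to the observed next state $\x_d(t+1)$; likewise $\tilde H\delta_N^j=\y_d(t)$ fixes ${\rm col}_j(\tilde H)$ whenever the state $\delta_N^j$ is observed. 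Thus compatibility with the data decouples column by column, and this is what makes the counting conditions ii) and iii) decisive.

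For ii) $\Leftrightarrow$ iii) I would simply unwind the definitions. The condition $\begin{bmatrix}X_p\\U_p\end{bmatrix}\delta_T^k=\begin{bmatrix}\delta_N^j\\\delta_M^i\end{bmatrix}$ says that at the $k$-th recorded step the state is $\delta_N^j$ and the input is $\delta_M^i$, while ${\rm col}_k(U_p*X_p)={\rm col}_k(U_p)\otimes{\rm col}_k(X_p)=\delta_M^i\otimes\delta_N^j=\delta_{NM}^{(i-1)N+j}$. Therefore the $((i-1)N+j)$-th row of $U_p*X_p$ is nonzero precisely when some step records the pair $(\delta_M^i,\delta_N^j)$; reading this over all $(i,j)$ shows that ``$U_p*X_p$ has no zero rows'' is literally the statement ii) that every input/state pair is recorded. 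The parenthetical ``full row rank'' is then immediate, since a logical matrix has (real) rank equal to the number of distinct canonical vectors occurring among its columns, which is exactly its number of nonzero rows; hence no zero rows $\Leftrightarrow$ rank $NM$.

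For iii) $\Rightarrow$ i): if every pair $(i,j)$ appears, the decoupling forces every column of every block $\tilde L_i$ to equal the value produced by the true $L$, so $\tilde L=L$; moreover, every pair appearing entails that every state $\delta_N^j$ appears as a current state, so every column of $\tilde H$ is pinned as well, giving $\tilde H=H$. Since $(L,H)\in\mathcal B_d$ always, this yields $\mathcal B_d=\{(L,H)\}$. Conversely, for $\neg$iii) $\Rightarrow$ $\neg$i): if some pair $(i_0,j_0)$ is never recorded, then ${\rm col}_{j_0}(\tilde L_{i_0})$ is left entirely free by the data, so (assuming $N\ge2$) I can keep $\tilde H=H$ and all remaining columns equal to those of $L$ while changing this one column to a different canonical vector, exhibiting a second element of $\mathcal B_d$ and contradicting i).

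The one point requiring care is the role of the output equation: a priori, identifiability of the pair $(L,H)$ might demand more than the purely state/input condition iii). The crux is that it does not, and this is the main obstacle to argue cleanly. The reduction works because observing every input/state pair already forces every state to be observed, so the output constraints are automatically saturated and $\tilde H$ is determined ``for free'', while conversely a single unobserved input/state pair breaks identifiability of $\tilde L$ irrespective of the outputs. This is exactly why the characterization coincides with the output-free one of Li et al., and making this argument precise (together with noting the degenerate case $N=1$, where no column freedom exists and identifiability is automatic) is where the proof must be stated with some attention.
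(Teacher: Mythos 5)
Your proof is correct, and it reaches the same conclusion by a more self-contained route than the paper. The paper's proof is essentially a two-line reduction: it invokes Proposition 3.4 of the cited work of Li et al. for the equivalence between identifiability of the state equation and condition \emph{iii)}, and then adds the single observation that full row rank of $U_p*X_p$ forces full row rank of $X_p$, so every state is observed and $H$ is pinned down; the equivalence \emph{ii)} $\Leftrightarrow$ \emph{iii)} is declared straightforward. You instead re-derive the content of that cited proposition from first principles, via the identity $\tilde L\ltimes\delta_M^i\ltimes\delta_N^j=\mathrm{col}_j(\tilde L_i)$ and the resulting column-by-column decoupling of the compatibility constraints, and you make the counterexample construction for $\neg$\emph{iii)} $\Rightarrow$ $\neg$\emph{i)} explicit. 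What your approach buys is transparency and independence from the external reference; what the paper's buys is brevity. Your treatment of the output equation (every input/state pair observed $\Rightarrow$ every state observed $\Rightarrow$ $\tilde H=H$) is exactly the paper's argument. One small point in your favor: the degenerate case you flag is real --- for $N=1$ (and similarly when a missing column of $\tilde L_{i_0}$ admits no alternative canonical vector) condition \emph{i)} can hold while \emph{iii)} fails, so the implication \emph{i)} $\Rightarrow$ \emph{iii)} genuinely needs $N\ge 2$; the paper does not comment on this, though it is harmless since $N=2^n\ge 2$ for any actual BCN.
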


{\color{black}
\begin{proof}
{\em i)} $\Rightarrow$ {\em iii)}\ If {\em i)} holds, then the data $(\x_d$, $\ub_d)$ are informative for identifiability of   equation \eqref{bcnA}, and hence, by Proposition 3.4 in \cite{Li_BCNDD}, we can claim that 
{\em iii)} holds. \\
{\em iii)} $\Rightarrow$ {\em i)}\ If condition {\em iii)} holds, by Proposition 3.4 in \cite{Li_BCNDD}, we can uniquely identify the matrix $L$. 
On the other hand,  if the matrix $U_p*X_p$
 is of full row rank, then 
 $X_p$ is of full row rank, in turn.
This means that all   possible states appear at least once in $X_p$, and consequently also the corresponding output in $Y_p$.   
 This allows to uniquely identify  also the matrix $H$ in \eqref{bcnAo}, and hence the data $(\x_d, \ub_d, \y_d)$ are informative for identifiability of the whole BCN \eqref{BCNtot}.

{\em ii)} $\Leftrightarrow$ {\em iii)} is straightforward.
\end{proof}
}



  \smallskip

 \begin{remark}
{\color{black}  If the data $(\x_d, \ub_d, \y_d)$ are informative for identifiability and 
condition \eqref{ijk} holds, then one can identify the $j$-th column of matrix $L_i$ as
${\rm col}_j (L_i) = X_f\delta_T^k$.
As it can be deduced from Remark 3.5 in \cite{Li_BCNDD},
one can identify the whole matrix $L$ using
the following formula
$$
L = X_f \odot_{\mathcal B} (U_p*X_p)^\top,
$$
where $\odot_{\mathcal B}$ is the Boolean product of matrices (that acts as the regular matrix product with the multiplication replaced by the AND and the addition replaced by the OR).
 Alternatively, by resorting to the standard matrix product, one can deduce $L$ as follows:
$$
L = X_f(U_p*X_p)^\top \left({\rm diag}(\vect{1}_N^\top X_f(U_p*X_p)^\top)\right)^{-1}.
$$
Finally, as $X_p$ is of full row rank (see the proof of Lemma \ref{info_id}), it is immediate to obtain  $H$ as 
$$
H = Y_pX_p^{\#},
$$
where $X_p^{\#} \triangleq X_p^\top(X_pX_p^\top)^{-1}$ is a 
right inverse of $X_p$. 
}
\end{remark}
\smallskip

To make the following analysis meaningful, from now on we assume that the collected data are {\em not} informative for identifiability of the considered BCN, namely
${\mathcal B}_d \supsetneq \{(L,H)\}$,
and we investigate under what conditions it is still possible to use them to {\color{black} evaluate certain properties or to} solve some control problems. 
{\color{black} By adhering to the approach initiated in the papers by van Waarde, Eising et al. (see, e.g., \cite{informativity,info_magazine}) and first explored for logical networks in \cite[Definition 2.1]{Li_BCNDD}, when the available data do not allow identifying the BCN that generated them, the best we can do is to understand if a certain property holds or some control problem is solvable for all the BCNs in ${\mathcal B}_d$. When so, we will say that the data $(\x_d, \ub_d, \y_d)$ are informative for the property or for the problem solution.}

{\color{black} \section{Reachability property} \label{reach}}

{\color{black} In this section, we provide an informativity-based characterization of the reachability of a set of states for a BCN. Before analyzing the problem in a data-driven context, we first recall the model-based approach to the reachability property.
\smallskip
}

\begin{definition} \cite{BCNCheng} \label{reach}
Given a BCN as in \eqref{bcnA}, we say that a state $\x_f = \delta_N^i$ is {\em reachable} from $\x_0 = \delta_N^j$ if there exist $\tau \in \mathbb Z_+$ and an input $\ub (t), t \in [0, \tau-1],$ that leads the state trajectory from $\x(0) = \x_0$ to $\x(\tau) = \x_f$. 
{\color{black} A set ${\mathcal X}\subset {\mathcal L}_N$ is {\em reachable} from $\x_0$, if there exists $\x_f \in \mathcal X$ that is reachable from $\x_0$. 
A state $\x_f$ (a set ${\mathcal X}$) is {\em globally reachable} if it is reachable from every $\x_0$.
The     BCN  \eqref{bcnA} is {\em reachable} if    every state $\x_f$ is globally reachable. }
\end{definition}
 \smallskip
 
 {\color{black} The  reachability of a single state, of a set ${\mathcal X}$ and of the whole BCN can be characterized (see, e.g., Section 3 in \cite{EF_MEV_BCN_Aut} and Chapter 16 in \cite{BCNCheng})}  by resorting to the matrix  
\be \label{ltot}
L_{tot} \triangleq  L_1 \vee L_2 \vee \dots \vee L_M \in {\mathcal B}^{N\times N}, 
\ee
and the associated digraph $\mathcal D(L_{tot}) \triangleq (\mathcal V, \mathcal E_{tot})$.
 The state $\x_f = \delta_N^i$ is reachable from $\x_0 = \delta_N^j$ if and only if there exists $\tau \in \mathbb Z_+$ such that $[L_{tot}^\tau]_{ij} = 1$, or equivalently,  there is a path of length $\tau$ from $j$ to $i$ in $\mathcal D(L_{tot})$.\footnote{In the sequel, we will interchangeably write that a node (in the digraph) or a state (of the BCN) is  reachable from another node or state. {\color{black} More generally, as there is a bijective correspondence between states and nodes, with a slight abuse of terminology, we will often refer to state $\delta^i_N$ to denote node $i$ in the digraph, and vice versa. }} 
 {\color{black} This is the case if and only if
 $(\delta^i_N)^\top \left(\sum_{\tau=0}^{N-1} L_{tot}^\tau\right) \delta^j_N > 0$.
 Consequently, the state $\x_f = \delta^i_N$ (the set 
 ${\mathcal X}= \{ \delta^{i_1}_N, \delta^{i_2}_N, \dots, \delta^{i_d}_N\}$) is globally reachable if and only if  all entries of the row vector $(\delta^i_N)^\top \left(\sum_{\tau=0}^{N-1} L_{tot}^\tau\right)$
 (of the row vector $(\sum_{\ell=1}^d \delta^{i_\ell}_N)^\top \left(\sum_{\tau=0}^{N-1} L_{tot}^\tau\right)$) are positive.
 This implies that 
 a BCN is reachable if and only if the matrix $\sum_{\tau=0}^{N-1} L_{tot}^\tau$ has all positive entries, which means that} $L_{tot}$ is irreducible, or equivalently $\mathcal D(L_{tot})$  is strongly connected. \\

{\color{black}
We are now ready to introduce the notion of informativity for reachability of a set of states $\mathcal X$ that extends Definition 3.7 in \cite{Li_BCNDD}.
\medskip

 \begin{definition} \label{info_reach_def}
Given a set ${\mathcal X}\subset  \mathcal L_N$ (in particular, a state $\x_f\in {\mathcal L}_N$), we say that  the data $(\x_d, \ub_d, \y_d)$ are {\em informative for reachability of} $\mathcal X$ if the set $\mathcal X$ (the state $\x_f$) is globally reachable for all the BCNs in ${\mathcal B}_d$. 
\end{definition}
 \medskip

Since the reachability of some state $\x_f$ from another state $\x_0$ is related to the existence of a finite number of state transitions that lead the state from $\x_0$ to $\x_f$, 
it is clear that when we try to verify this property for
all BCNs in ${\mathcal B}_d$, we can
rely only on  the state transitions that are revealed by the data. This  can be performed by first identifying the best possible approximation of $L_{tot}$ based on data, say $L_{tot}^d \in {\mathcal B}^{N\times N}$, and then verifying the reachability properties as explained in the model-based approach by referring to such matrix. Note that in general (differently from $L_{tot}$) the matrix $L_{tot}^d$ may have zero columns. This happens 
if there exists $j\in [1,N]$ such that $\delta^j_N$ does not appear as a column of $X_p$. 
We have the following result.
\medskip

\begin{proposition} \label{Ltot_data}
    Set $L_{tot}^d \triangleq X_f \odot_{\mathcal B} X_p^\top$. Then
    for every $(i,j) \in [1,N]\times [1,N]$, there exists $k\in[1,T]$ such that
\be
\begin{bmatrix} 
X_p \cr 
X_f
\end{bmatrix} \delta_T^k = 
\begin{bmatrix}
\delta_N^j \cr 
\delta_N^i
\end{bmatrix},
\label{statostato}
\ee
if and only if $[L_{tot}^d]_{ij}=1$. Therefore
the data are informative for reachability of the set 
 ${\mathcal X}= \{ \delta^{i_1}_N, \delta^{i_2}_N, \dots, \delta^{i_d}_N\}$
 (of the state $\x_f = \delta^i_N$)  if and only if  all entries 
 of the row vector $(\sum_{\ell=1}^d \delta^{i_\ell}_N)^\top \left(\sum_{\tau=0}^{N-1} (L_{tot}^d)^\tau\right)$ (of the row vector $(\delta^i_N)^\top \left(\sum_{\tau=0}^{N-1} (L_{tot}^d)^\tau\right)$) are positive.
\end{proposition}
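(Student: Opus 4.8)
The plan is to establish the statement in two logically separated parts: first the pointwise characterization of the entries of $L_{tot}^d$ via equation \eqref{statostato}, and then the reachability conclusion, which follows by reducing to the model-based criterion already recalled in this section.

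For the first part, I would unwind the definition $L_{tot}^d = X_f \odot_{\mathcal B} X_p^\top$. By the definition of the Boolean matrix product, $[L_{tot}^d]_{ij} = \bigvee_{k=1}^{T} [X_f]_{ik} \land [X_p^\top]_{kj} = \bigvee_{k=1}^{T} [X_f]_{ik}\land [X_p]_{jk}$. Since every column of $X_p$ and of $X_f$ is a canonical vector, $[X_p]_{jk}=1$ is equivalent to ${\rm col}_k(X_p)=\delta_N^j$, and likewise $[X_f]_{ik}=1$ is equivalent to ${\rm col}_k(X_f)=\delta_N^i$. Hence the $k$-th term of the disjunction equals $1$ precisely when the $k$-th data sample satisfies $X_p\delta_T^k=\delta_N^j$ and $X_f\delta_T^k=\delta_N^i$ simultaneously, which is exactly condition \eqref{statostato}. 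Therefore $[L_{tot}^d]_{ij}=1$ if and only if some index $k$ realizes \eqref{statostato}, which is the claimed equivalence. This step is routine bookkeeping with canonical vectors and should present no obstacle.

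The crux is the second part: showing that the global-reachability test applied to $L_{tot}^d$ certifies reachability for \emph{all} BCNs in $\mathcal B_d$, and conversely. The key bridge is the observation that a state transition $\delta_N^j \to \delta_N^i$ is witnessed by the data (i.e., condition \eqref{statostato} holds, equivalently $[L_{tot}^d]_{ij}=1$) if and only if every $(\tilde L,\tilde H)\in\mathcal B_d$ is forced to admit that transition under some input: indeed, if the pair $(\delta_N^j,\delta_N^i)$ appears as a past/future state pair in the data, then by the compatibility constraints in \eqref{compBCNo} \emph{every} admissible $\tilde L$ must satisfy $\tilde L \ltimes \ub_d^\cdot \ltimes \delta_N^j = \delta_N^i$ for the recorded input, so $[\tilde L_{tot}]_{ij}=1$; hence $L_{tot}^d \le \tilde L_{tot}$ entrywise for all compatible $\tilde L_{tot}$. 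This monotonicity is what makes the data-based matrix a conservative (lower-bound) surrogate. The main subtlety, which I expect to be the principal obstacle, is the converse direction needed for the ``only if'': one must argue that any transition \emph{not} witnessed by the data can be made absent in at least one compatible BCN, so that a path present in some $\tilde L_{tot}$ but not in $L_{tot}^d$ cannot be guaranteed across all of $\mathcal B_d$. This follows because the data do not constrain the columns of $\tilde L$ corresponding to state/input pairs that never occur in $X_p,U_p$, so those columns are free and can be chosen to destroy any unwitnessed edge.

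Given the entrywise relation $L_{tot}^d \le \tilde L_{tot}$ for all $(\tilde L,\tilde H)\in\mathcal B_d$, with equality of the guaranteed edge set to the support of $L_{tot}^d$, I would conclude by invoking the model-based characterization recalled just above Definition \ref{info_reach_def}: global reachability of a set $\mathcal X = \{\delta_N^{i_1},\dots,\delta_N^{i_d}\}$ holds for a given BCN exactly when all entries of $\bigl(\sum_{\ell=1}^d \delta_N^{i_\ell}\bigr)^\top \bigl(\sum_{\tau=0}^{N-1} \tilde L_{tot}^\tau\bigr)$ are positive. The guaranteed reachability across all of $\mathcal B_d$ is therefore governed by the \emph{smallest} such matrix, namely the one built from $L_{tot}^d$, so $\mathcal X$ is globally reachable for every compatible BCN if and only if all entries of $\bigl(\sum_{\ell=1}^d \delta_N^{i_\ell}\bigr)^\top \bigl(\sum_{\tau=0}^{N-1} (L_{tot}^d)^\tau\bigr)$ are positive, with the single-state case $\x_f=\delta_N^i$ obtained by taking $d=1$. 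The power sum need only run to $N-1$ because any realizable path between two of the $N$ nodes can be taken to have length at most $N-1$.
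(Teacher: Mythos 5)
Your proposal is correct and follows the same route as the paper, which however disposes of the first claim by citing Lemma 3.8 of \cite{Li_BCNDD} and declares the second part immediate from the surrounding comments (the model-based test on $L_{tot}$ and the fact that $L_{tot}^d$ is the largest Boolean matrix below every compatible $\tilde L_{tot}$); you simply supply the bookkeeping the paper outsources. The only spot to tighten is the ``only if'' direction: rather than destroying unwitnessed edges one at a time, you need a \emph{single} compatible BCN in which reachability fails, which you get by setting \emph{all} unconstrained columns $\tilde L_i\delta_N^j$ to $\delta_N^j$ simultaneously, so that $\tilde L_{tot}$ exceeds $L_{tot}^d$ only by self-loops, which cannot create a path to $\mathcal X$ where none exists in $\mathcal D(L_{tot}^d)$ (this is exactly the construction the paper uses in the necessity part of Theorem \ref{sol2}).
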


\begin{proof}
    The first statement follows from Lemma 3.8 in \cite{Li_BCNDD}. The second part is immediate from the first statement and the comments before Proposition \ref{Ltot_data}.
\end{proof}
}
\smallskip

{\color{black} 
Proposition \ref{Ltot_data} provides an extension of  Theorem 3.9 in \cite{Li_BCNDD}, where a characterization of informativity for (the overall) reachability  is provided. Indeed, differently from \cite{Li_BCNDD}, we   focus on the reachability of a set or a state, since this will be useful in the following sections.
Moreover, as an interesting byproduct, we have deduced the matrix
$L_{tot}^d$ 
that represents the best possible
 ``under-approximation" of the matrix $L_{tot}$, in the sense that it is the   largest Boolean matrix  that satisfies $L_{tot}^d \le L_{tot}$ for all the matrices $L_{tot}$ of the BCNs in ${\mathcal B}_d$.  

It is worth underlining that Proposition \ref{Ltot_data} only provides an answer to the question of whether or not data are informative for reachability, but it does not offer any insight into the way one can select an input sequence that achieves the goal. To this end, we  introduce an algorithm, which also allows us to explore another concept that will be useful in the subsequent analysis. 
\medskip

\begin{definition} \label{basin}
Given a BCN described as in \eqref{bcnA}, the {\em basin of attraction} ${\mathcal S}(\x_f)$ {\em of a state} $\x_f\in {\mathcal L}_N$ is the set of states  $\x_0 \in {\mathcal L}_N$ from which it is possible to reach $\x_f$. 
\end{definition}
\medskip

Clearly, a state $\x_f$ is
  globally reachable if and only if its basin of attraction   ${\mathcal S}(\x_f)$ coincides with ${\mathcal L}_N$. 

  Algorithm \ref{alg1}, below, provides a way to determine
  the basin of attraction ${\mathcal S}^*$ of a set ${\mathcal X}$, by this meaning the union of the basins of attractions of the states in ${\mathcal X}$. For each state in ${\mathcal S}^*\setminus {\mathcal X}$ the algorithm also returns  
 a possible choice for the input that if applied to that state at a certain time instant $\bar t$, it ensures that at the next time instant $\bar t +1$ the distance between the state and the target set $\mathcal X$ decreases. Therefore, it readily follows that, by applying the inputs provided by the algorithm, if the set 
$\mathcal X$ is globally reachable, it will be reached in a finite number of steps.   
}
\medskip

\begin{algorithm}[h]
\caption{Reachability of ${\mathcal X}$} \label{alg1} 
\smallskip
\textbf{Input:}  - The data matrices $X_p, X_f$ and $U_p$;\\
\hspace*{3.18em} - The set ${\mathcal X}= \{ \delta^{i_1}_N, \delta^{i_2}_N, \dots, \delta^{i_d}_N\}$. \\
\textbf{Output:} - A possible choice of inputs $(\ub_1,\ub_2, \dots, \ub_N)$;
\hspace*{4em} - The basin of attraction ${\mathcal S}^* \triangleq \bigcup_{\x \in {\mathcal X}} \mathcal S(\x)$ of \\
 \hspace*{4.9em}${\mathcal X}$.  \smallskip \\
{\em Initialization:} $(\ub_1,\ub_2, \dots, \ub_N)= (\vect{0}, \vect{0},  \dots, \vect{0})$. \\
Set $d = 0$,  ${\mathcal S}_0 =  \mathcal X$.\\
{\em Iterative procedure:}\\  {\tt if} ${\mathcal S}_d \ne \emptyset$, {\tt then}
\begin{itemize}
\item  $d \leftarrow d+1$
\item ${\mathcal S}_d = \emptyset$
\item {\tt for}   $k \in [1,T]$, {\tt do}\\
\hspace*{2.5em}{\tt if}
\hspace*{4em}$$
\begin{bmatrix} 
X_p \cr 
X_f 
\end{bmatrix} \delta_T^k = 
\begin{bmatrix}
\delta_N^r \cr 
\delta_N^q
\end{bmatrix}, \qquad \exists \delta^q_N\in {\mathcal S}_{d-1},
$$
\hspace*{2.5em}{\tt and} $\delta^r_N \not\in \cup_{i=1}^{d} {\mathcal S}_i$, {\tt then} \smallskip\\
\hspace*{4.5em}
${\mathcal S}_d \leftarrow {\mathcal S}_d \cup \{\delta^r_N\}$ and $\ub_r \triangleq U_p \delta_T^k$ \\
\hspace*{2.5em}{\tt end if} \\
{\tt end for}
\item Go back to the {\em Iterative procedure}.
\end{itemize}
{\tt else}  set $\mathcal S^* = \bigcup_{i=0}^{d-1} \mathcal S_i$ 
\\
{\tt end if} \\
{\em Conclusion:}\ Return ($(\ub_1,\ub_2, \dots, \ub_N)$, $\mathcal S^*$)

\end{algorithm}

Note that Algorithm \ref{alg1} identifies first the set {\color{black} ${\mathcal S}_1$ of  states  in  ${\mathcal L}_N \setminus \mathcal X$ that have a successor in $\mathcal S_0 = \mathcal X$.  For each such state,  say $\delta^r_N$, it memorizes  in $\ub_r$ one of the values of the input that 
 allows the transition from $\delta^r_N$ to one of the states in $\mathcal X$, say $\delta^q_N$ (referring to the algorithm notation).}
  Then it identifies the set ${\mathcal S}_2$ of the  states that do not belong to ${\mathcal S}_0\cup {\mathcal S}_1 = \color{black}{\mathcal{X}} \cup {\mathcal S}_1$, but have a successor in ${\mathcal S}_1$.   Again, for each $\delta^r_N\in {\mathcal S}_2$, it memorizes one of the values of the input that 
 allows the transition from $\delta^r_N$ to some state   in ${\mathcal S}_1$.
 Since the set of states of a BCN is finite, there exists $d \ge 1$ such that ${\mathcal S}_d =\emptyset$.
{\color{black}  If at that stage $\bigcup_{i=0}^{d-1} {\mathcal S}_i = \mathcal L_N$, 
 this means that
 $\mathcal X$ is globally reachable.} 
\smallskip

  
 
{\color{black} \section{Equilibrium points of a BCN} \label{eqpts}}

{\color{black} We now recall the notions of equilibrium point and limit cycle in the context of logical systems. } 
\smallskip
 
\begin{definition} \label{eqpt}

Given a BN as in \eqref{bna},
a state $\x_e\in {\mathcal L}_N$ is an {\em equilibrium point of the BN} if $\x_e = L \x_e$. 
Given a BCN as in \eqref{bcnA}, we say that $\x_e$ is an {\em equilibrium point of the BCN 
corresponding to the input} $\ub=\delta^i_M$
if  $\x_e$ is an equilibrium point of its $i$-th subsystem  \eqref{Li}, 
 i.e.,  $\x_e = L \ltimes \delta^i_M \ltimes \x_e$.
\end{definition}

Note that  $\x_e=\delta^j_N$ is  an 
 equilibrium point of a BN or a BCN 
 if and only if the
    node $j$ has a    self-loop in ${\mathcal D}(L)$ or ${\mathcal D}(L_{tot})$, respectively.
    {\color{black} This amounts to saying that $[L]_{jj}$ or $[L_{tot}]_{jj}$, respectively, is unitary.}
 \smallskip

 {\color{black} Based on the comments about $L_{tot}$ in the previous section, it is immediate to deduce that the set of equilibrium points compatible with the available data, say, $\mathcal X_e^d$, namely the set of states that are equilibria in all the BCNs in $\mathcal B_d$, 
 coincides with the set 
 of states $\delta^j_N$ such that $[L_{tot}^d]_{jj} =1$, or, equivalently, $[X_f(X_p)^\top]_{jj} >0$, which is consistent with the 
analysis proposed in Section 3.2 of \cite{Li_BCNDD}. 
 As an alternative, we can notice that $[L_{tot}^d]_{jj} =1$ if and only if  $\exists k \in [1,T]$ such that 
$$ 
\begin{bmatrix} 
X_p \cr 
X_f 
\end{bmatrix} \delta_T^k = 
\begin{bmatrix}
\delta_N^j \cr 
\delta_N^j
\end{bmatrix} \Leftrightarrow 
(X_p \odot X_f) \delta^k_T =  \delta_N^j,
$$ 
where we recall that $X_p \odot X_f$ is the Hadamard product of $X_p$ and $X_f$. This characterization has the advantage of allowing us to identify the input to which the equilibrium point corresponds as
}
$\ub_j \triangleq U_p \delta_T^k = \delta_M^i, \exists i \in [1,M]$. 
\smallskip 

 {\color{black}
 Since the number of possible states is finite,
 as time evolves, the state of a Boolean network either becomes constant (and hence reaches an equilibrium point) 
 or it becomes periodic. In the latter case, we talk about a limit cycle. Note that an equilibrium point is actually a limit cycle of unit length.
 } 
 \medskip
 
 \begin{definition}\cite{EF_MEV_BCN_Aut} \label{limitcycle}
Given a BN as in \eqref{bna}, an ordered sequence of distinct logical vectors $(\delta_N^{i_1}, \delta_N^{i_2}, \dots, \delta_N^{i_k})$ is   a {\em limit cycle $\mathcal C$} if, taken $\x(0) = \delta_N^{i_\ell}$ for some $\ell\in[1,k]$, the corresponding state evolution $\x(t)$ is periodic of period $k$ and, for every $t \in \mathbb Z_+$, $\x(t) = \delta_N^{i_j}$, with $j\in [1,k]$ satisfying $j = (t+\ell){\rm mod} k$. 
\end{definition}
 \smallskip

{\color{black} We will make use of this concept in Section \ref{p2}.}

\section{Data-driven safe control} \label{sec.SC}

In \cite{EFMEV_JCD} the safe control of a BCN was first addressed,  then it was extended to handle random impulsive logical control networks in \cite{Zhou23} and Probabilistic Boolean control networks using an event-triggered approach in \cite{Shao25}.
The safe control  problem  is stated as follows. Given a BCN (\ref{bcnA}), suppose that the set of states ${\mathcal L}_N$ can be partitioned into a set of {\em unsafe states} ${\mathcal X}_u$ and a set of {\em safe states} ${\mathcal X}_s \triangleq {\mathcal L}_N \setminus {\mathcal X}_u$. Under what conditions is it possible to design  a  control input so that  every state trajectory that originates from a safe state remains  indefinitely in ${\mathcal X}_s$ and every trajectory that starts from an unsafe state 
 enters   ${\mathcal X}_s$
 in a finite number of steps and there remains?
\medskip

In \cite{EFMEV_JCD}  a complete characterization of the problem solvability  has been provided.
 \smallskip

\begin{proposition} \label{safecontrol} 
Given a BCN \eqref{bcnA} and the set ${\mathcal X}_u$ of unsafe states, the safe control problem
 is  solvable  if and only if 
 \begin{itemize}
 \item[i)] for every ${\bf x}\in {\mathcal X}_s = {\mathcal L}_N \setminus {\mathcal X}_u$ there exists ${\bf u}\in {\mathcal L}_M$ such that $L\ltimes {\bf u}\ltimes {\bf x}\in {\mathcal X}_s$;
 
 \item[ii)] the set ${\mathcal X}_s$ is reachable from every ${\bf x}\in {\mathcal X}_u$, which amounts to saying that
 for every ${\bf x}\in {\mathcal X}_u$ there exists $\bar {\bf x}\in {\mathcal X}_s$ such that $\bar {\bf x}$ is reachable from ${\bf x}$.
\end{itemize}
Moreover, if  the safe control problem is solvable, 
then it is solvable by means of a state feedback law.
\end{proposition}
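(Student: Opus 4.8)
The plan is to establish the chain of implications: solvability $\Rightarrow$ conditions i)--ii) $\Rightarrow$ solvability by state feedback $\Rightarrow$ solvability, so that the three statements are equivalent and the ``moreover'' part is obtained for free.

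First I would prove necessity. Assume the safe control problem is solvable, so that there exists a control strategy steering every trajectory as prescribed. Picking an arbitrary ${\bf x}\in {\mathcal X}_s$, the requirement that a trajectory originating in ${\mathcal X}_s$ remains in ${\mathcal X}_s$ forces, in particular, the existence of an input ${\bf u}$ with $L\ltimes {\bf u}\ltimes {\bf x}\in {\mathcal X}_s$, which is exactly i). Picking an arbitrary ${\bf x}\in {\mathcal X}_u$, the requirement that the trajectory enter ${\mathcal X}_s$ in finitely many steps is precisely the statement that ${\mathcal X}_s$ is reachable from ${\bf x}$, i.e. ii).

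The core of the argument is sufficiency, where I would explicitly build a static state feedback. I would organize ${\mathcal L}_N$ into layers according to the distance to ${\mathcal X}_s$ in ${\mathcal D}(L_{tot})$: set ${\mathcal S}_0\triangleq {\mathcal X}_s$ and, for $k\ge 1$, let ${\mathcal S}_k$ be the set of states in ${\mathcal X}_u\setminus\bigcup_{j<k}{\mathcal S}_j$ that admit, for some input, a one-step successor in ${\mathcal S}_{k-1}$. This is exactly the backward-reachability construction underlying Algorithm \ref{alg1} applied with target ${\mathcal X}_s$. Reachability of ${\mathcal X}_s$ from every unsafe state (condition ii)) guarantees that $\bigcup_{k\ge 0}{\mathcal S}_k = {\mathcal L}_N$, i.e. every state falls into some finite layer. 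I would then define the feedback column-wise: for $\delta^j_N\in {\mathcal S}_0={\mathcal X}_s$ choose ${\rm col}_j(K)$ to be an input ${\bf u}$ furnished by i), so that $L\ltimes{\bf u}\ltimes\delta^j_N\in {\mathcal X}_s$; for $\delta^j_N\in{\mathcal S}_k$ with $k\ge 1$ choose ${\rm col}_j(K)$ to be an input driving $\delta^j_N$ into ${\mathcal S}_{k-1}$. Assembling these columns yields $K\in {\mathcal L}_{M\times N}$, and the closed-loop law ${\bf u}(t)=K\ltimes {\bf x}(t)$ is a genuine time-invariant state feedback. To verify the closed loop meets both specifications: by the choice on ${\mathcal S}_0$ the set ${\mathcal X}_s$ is invariant, so any trajectory lying in ${\mathcal X}_s$ stays there forever; by the layered choice on ${\mathcal S}_k$ the layer index strictly decreases at each step, so a trajectory starting in layer $k\ge 1$ reaches ${\mathcal S}_0={\mathcal X}_s$ after at most $k$ steps and is then trapped by invariance. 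Since a state feedback solution is in particular a solution, the reverse implication and the ``moreover'' claim follow immediately.

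The main obstacle I anticipate lies in the sufficiency construction. One must ensure both that the layers are well defined and exhaust ${\mathcal L}_N$ (which rests squarely on condition ii)), and that the per-state input choices can be simultaneously realized by a single matrix $K$ without conflict. The key point making this work is that the input assigned to a state depends only on that state, through its layer index, and not on the past history; this is precisely what renders the resulting law a static state feedback and what guarantees the strict monotone decrease of the layer index along closed-loop trajectories, hence finite-time reach-and-stay.
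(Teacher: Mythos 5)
Your proof is correct. Note that the paper itself does not prove Proposition \ref{safecontrol} --- it is imported from \cite{EFMEV_JCD} as a known model-based result --- so there is no in-paper argument to compare against; but your two-part structure (necessity by restricting to the first step for safe states and to finite-time entry for unsafe states; sufficiency by a backward-reachability layering ${\mathcal S}_0={\mathcal X}_s$, ${\mathcal S}_1,{\mathcal S}_2,\dots$ with a column-wise feedback that makes ${\mathcal X}_s$ invariant and strictly decreases the layer index) is exactly the constructive logic the paper itself deploys in its data-driven counterparts, namely Algorithm \ref{alg1} and Steps 1--2 of Algorithm \ref{alg2} and the proof of Theorem \ref{sol3}. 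The one step worth spelling out, which you assert but do not argue, is that condition ii) implies the layers exhaust ${\mathcal L}_N$: given a path from ${\bf x}\in{\mathcal X}_u$ to ${\mathcal X}_s$, one walks backward from the first safe state on the path to place each preceding unsafe state in some finite layer; this is routine but is the only place condition ii) actually enters the sufficiency argument.
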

 \smallskip

{\color{black} \begin{remark} \label{differenze}
It is worth remarking that safe control to a  set ${\mathcal X}_s$ is similar to the problem of stabilizing a BCN to the set ${\mathcal X}_s$, but it does not coincide with it.  Indeed, while stabilization only requires that every state trajectory eventually enters ${\mathcal X}_s$, safe control imposes the additional constraint that states belonging to ${\mathcal X}_s$ can never leave ${\mathcal X}_s$, not even in a transient phase. 
\end{remark}
}
\smallskip

 The data-driven version of the safe control problem is the following one.
  \smallskip
 
 \begin{problem} \label{probl3}
Given the data $(\x_d$, $\ub_d)$ and the unsafe set ${\mathcal X}_u\subsetneq {\mathcal L}_N$, determine (if it exists) a state feedback control law $\ub(t) = K \x(t)$,  $K\in\mathcal L_{M\times N}$, that solves the safe control problem for  every BCN compatible with the data. 
\end{problem}
 \medskip

When Problem \ref{probl3} is solvable, the data $(\x_d$, $\ub_d)$ will be called {\em informative for safe control} (with respect to the unsafe set ${\mathcal X}_u$). {\color{black} In the following theorem we provide necessary and sufficient conditions for the solvability of Problem \ref{probl3}.
\medskip

\begin{theorem} \label{sol3}
Consider the unsafe set ${\mathcal X}_u$ and set ${\mathcal X}_s = {\mathcal L}_N \setminus {\mathcal X}_u = \{ \delta^{i_1}_N, \delta^{i_2}_N, \dots, \delta^{i_d}_N\}$. Problem \ref{probl3} is solvable (i.e., the data  $(\x_d, \ub_d)$ are informative for safe control with respect to the unsafe set ${\mathcal X}_u$)
if and only  if 
 the following two conditions hold:

\begin{itemize}
\item[i)] for every $\delta_N^j\in {\mathcal X}_s$ there exist $k\in [1,T]$ and $\delta_N^\ell \in {\mathcal X}_s$ such that
\be
\begin{bmatrix} 
X_p \cr 
X_f 
\end{bmatrix} \delta_T^k = 
\begin{bmatrix}
\delta_N^{j} \cr 
\delta_N^\ell
\end{bmatrix}; 
\label{safe_succ}
\ee 
\item[ii)] 
  the row vector $(\sum_{\ell=1}^d \delta^{i_\ell}_N)^\top \left(\sum_{\tau=0}^{N-1} (L_{tot}^d)^\tau\right)$ has (all) positive entries.
\end{itemize}
\end{theorem}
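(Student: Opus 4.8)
The plan is to prove both directions by exploiting the single structural fact that makes the informativity framework tractable here. Call a pair $(i,j)\in[1,M]\times[1,N]$ \emph{revealed} if some column $k\in[1,T]$ satisfies $U_p\delta_T^k=\delta_M^i$ and $X_p\delta_T^k=\delta_N^j$. By the definition of ${\mathcal B}_d$, every compatible $\tilde L$ must reproduce the data, so for a revealed pair the successor $\tilde L\ltimes\delta_M^i\ltimes\delta_N^j$ equals $X_f\delta_T^k$ in \emph{every} BCN of ${\mathcal B}_d$; conversely, for a non-revealed pair the corresponding column of $\tilde L$ is entirely free, so one can exhibit a BCN in ${\mathcal B}_d$ whose successor on that pair is any prescribed state. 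A robust feedback must therefore rely only on revealed transitions, and this single observation drives both implications.

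For sufficiency I would construct the feedback explicitly. For each safe state $\delta_N^j$, condition i) supplies a column $k$ and a safe successor $\delta_N^\ell$ with $X_p\delta_T^k=\delta_N^j$, $X_f\delta_T^k=\delta_N^\ell$; setting $K\delta_N^j\triangleq U_p\delta_T^k$ uses a revealed transition, so $\tilde L\ltimes(K\delta_N^j)\ltimes\delta_N^j=\delta_N^\ell\in{\mathcal X}_s$ holds simultaneously for all $(\tilde L,\tilde H)\in{\mathcal B}_d$, whence safe states never leave ${\mathcal X}_s$. Condition ii) is exactly the certificate of Proposition \ref{Ltot_data} that ${\mathcal X}_s$ is reachable from every node through ${\mathcal D}(L_{tot}^d)$, so Algorithm \ref{alg1} run with ${\mathcal X}={\mathcal X}_s$ terminates with basin ${\mathcal S}^*={\mathcal L}_N$ and, for each $\delta_N^r\notin{\mathcal X}_s$, returns an input $\ub_r=U_p\delta_T^k$ built from a revealed transition that strictly decreases the distance to ${\mathcal X}_s$. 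Defining $K\delta_N^r\triangleq\ub_r$ on the unsafe states and using again that revealed transitions act identically on every BCN of ${\mathcal B}_d$, every unsafe trajectory reaches ${\mathcal X}_s$ in at most $N-1$ steps and then remains, uniformly over ${\mathcal B}_d$; thus Problem \ref{probl3} is solved.

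For necessity I would argue by contraposition, producing in each case a BCN in ${\mathcal B}_d$ on which an arbitrary candidate feedback $K$ fails. If i) fails at some safe $\delta_N^{j}$, then every revealed transition out of $j$ lands in ${\mathcal X}_u$; for whatever input $K\delta_N^{j}=\delta_M^{i}$ is chosen, either $(i,j)$ is revealed and its common successor is unsafe, or $(i,j)$ is free and one routes that column to an unsafe state, so in a suitable $\tilde L\in{\mathcal B}_d$ the safe state $j$ immediately leaves ${\mathcal X}_s$. If ii) fails, pick an unsafe $\delta_N^{j_0}$ from which ${\mathcal X}_s$ is unreachable in ${\mathcal D}(L_{tot}^d)$ and let ${\mathcal R}$ be the set of nodes reachable from $j_0$ through $L_{tot}^d$; then ${\mathcal R}\subseteq{\mathcal X}_u$ and ${\mathcal R}$ is closed under revealed transitions. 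Routing every free column whose state index lies in ${\mathcal R}$ back into ${\mathcal R}$ (e.g.\ to $\delta_N^{j_0}$) yields a $\tilde L\in{\mathcal B}_d$ whose closed loop traps every trajectory started at $j_0$ inside ${\mathcal R}$ forever, so ${\mathcal X}_s$ is never reached.

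The main obstacle I anticipate is this last trap construction: one must verify that ${\mathcal R}$ is genuinely invariant under every transition the feedback can use — revealed transitions stay in ${\mathcal R}$ by its definition as the $L_{tot}^d$-reachable set, while non-revealed ones are precisely the columns we are free to redirect into ${\mathcal R}$ — and that the resulting $\tilde L$ is a well-defined logical matrix in ${\mathcal B}_d$, i.e.\ that redirecting only free columns never contradicts any observed sample.
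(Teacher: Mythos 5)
Your proof is correct and follows essentially the same route as the paper's: condition i) certifies a revealed safe-to-safe transition for each safe state, condition ii) is exactly the data-based reachability certificate of Proposition \ref{Ltot_data}, and the feedback is assembled from revealed transitions as in Algorithm \ref{alg2}. You go beyond the paper's very terse proof by making explicit the common-feedback argument and the adversarial constructions for necessity (routing free columns to unsafe states, and trapping the $L_{tot}^d$-reachable set ${\mathcal R}$), which the paper leaves implicit here and only spells out in the analogous necessity proof of Theorem \ref{sol2}.
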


\begin{proof}
Problem \ref{probl3} is solvable if and only if conditions {\em i)} and {\em ii)} in Proposition \ref{safecontrol} are verified for all the BCNs compatible with the data.\\
In order to check condition {\em i)} of Proposition \ref{safecontrol},  we need to identify from data 
 if each state in ${\mathcal X}_s$ has (at least) one successor in ${\mathcal X}_s$.
This amounts to verifying if for every $\delta^j_N\in {\mathcal X}_s$ there exists $k\in [1,T]$ and $\delta^\ell_N \in {\mathcal X}_s$ such that
\be
\begin{bmatrix} 
X_p \cr 
X_f 
\end{bmatrix} \delta_T^k = 
\begin{bmatrix}
\delta_N^{j} \cr 
\delta_N^\ell
\end{bmatrix} 
\label{safe_succ}
\ee 
which is exactly {\em i)} in the theorem statement.
If this is the case,   an input that allows the transition is
\be
\ub_j \triangleq U_p \delta_T^k.
\label{input_safetosafe}
\ee
On the other hand, by Proposition \ref{Ltot_data}, condition {\em ii)} in Proposition \ref{safecontrol} is trivially equivalent to condition {\em ii)} in 
the theorem.
\end{proof}


Theorem \ref{sol3} establishes data-based conditions for the solvability of Problem \ref{probl3}; however, it does not yield an explicit solution, which is instead provided in Algorithm \ref{alg2}, below.
In 
  Algorithm \ref{alg2} 
  we first check condition {\em i)} of Theorem \ref{sol3}. If such condition is not verified, the algorithm stops and provides a negative outcome.
  If, on the other hand, the first check is successful, it performs the reachability check required by condition {\em ii)} from each state in ${\mathcal X}_u$ according to the same logic as in Algorithm \ref{alg1}.
  \\
  Algorithm \ref{alg2} returns, in Step 1, an input that keeps states already in $\mathcal X_s$ within 
$\mathcal X_s$, and in Step 2, an input that drives states in $\mathcal X_u$ closer to $\mathcal X_s$.
Such inputs are used to build a state feedback matrix $K$ that solves the problem.
\medskip

\begin{algorithm}[h]
\caption{Safe control} \label{alg2} 
\smallskip
{\color{black}
\textbf{Input:}  - The data matrices $X_p, X_f$ and $U_p$;\\
\hspace*{3.18em} - The set ${\mathcal X}_s= \{ \delta^{i_1}_N, \delta^{i_2}_N, \dots, \delta^{i_d}_N\}$. \\
\textbf{Output:} - Is  Problem 1 solvable?  Yes/No. \\
\hspace*{4em} - A state feedback matrix $K$ that solves the \\
\hspace*{4.85em}problem.  \smallskip \\
{\em Initialization:} $(\ub_1^1,\ub_2^1, \dots, \ub_N^1)= (\vect{0}, \vect{0},  \dots, \vect{0})$, \smallskip\\
\hspace*{6.25em}$(\ub_1^2,\ub_2^2, \dots, \ub_N^2)= (\vect{0}, \vect{0},  \dots, \vect{0})$. \\
\begin{enumerate}
    \item[\bf 1.] 
    \begin{itemize}
    \item Set $\mathcal S_0 = \emptyset$. \\
    \item {\tt for} $k \in [1,T]$, {\tt do}\\
\hspace*{1.5em}{\tt if}
\hspace*{4em}$$
\begin{bmatrix} 
X_p \cr 
X_f 
\end{bmatrix} \delta_T^k = 
\begin{bmatrix}
\delta_N^r \cr 
\delta_N^q
\end{bmatrix},
$$
\hspace*{2.5em} {\tt with} \ $\delta^r_N \in {\mathcal X}_s\setminus{\mathcal S}_0$\  {\tt and} \ $\delta^q_N\in {\mathcal X}_s$, {\tt then} \smallskip\\
\hspace*{4.5em}
${\mathcal S}_0 \leftarrow {\mathcal S}_0 \cup \{\delta^r_N\}$ and $\ub_r^1 \triangleq U_p \delta_T^k$ \\
\hspace*{1.8em}{\tt end if} \\
{\tt end for}\\ 
{\tt if} $\mathcal S_0 \not\equiv \mathcal X_s$, {\tt then} \\
\hspace*{2.5em} return (`No', {\tt `None'}). \\
{\tt end if}
\end{itemize}
    \item[\bf 2.] 
    \begin{itemize}
    \item Apply Algorithm \ref{alg1} with respect to the set $\mathcal X_s$. \\Let ($(\ub_1^2,\ub_2^2, \dots, \ub_N^2)$, $\mathcal S^*$) be the corresponding outcome. 
    \item {\tt for} $i \in [1,N]$, {\tt do} \\
    \hspace*{2.8em} $\ub_i = \ub_i^1 + \ub_i^2$ \\
    {\tt end for}
    \item Set $K = \begin{bmatrix}
        \ub_1 & \dots & \ub_N
    \end{bmatrix}$.
    \end{itemize}
\item[\bf 3.] {\tt if} $\mathcal S^* = \mathcal L_N$, {\tt then} \\
\hspace*{1.7em} return (`Yes', $K$) \\
 {\tt else}, \\
 \hspace*{2em}return (`No', {\tt `None'})\\
 {\tt end if} 
\end{enumerate}}
 \medskip
\end{algorithm}

}
Example \ref{ex1}, below, explores on a small size network  the results of this section 
pertaining data-driven   safe control.
 \smallskip
 
 \begin{example} \label{ex1}
 Consider a BCN \eqref{bcnA} with $N = 7$ and $M = 3$, described by the following matrix 
\begin{eqnarray*}
 L &=& \left[\begin{array}{c|c|c}
 L_1 & L_2 & L_3
 \end{array}\right]  = \big[  \delta_7^4 \ \delta_7^2 \ \delta_7^2 \ \delta_7^5 \ \delta_7^2 \ \delta_7^7 \ \delta_7^5 \ |\ \\
&&|\
\delta_7^6 \ \delta_7^1 \ \delta_7^3 \ \delta_7^2 \ \delta_7^4 \ \delta_7^5 \ \delta_7^7  \ |\  \delta_7^7 \ \delta_7^6 \ \delta_7^2 \ \delta_7^3 \ \delta_7^1 \ \delta_7^6 \ \delta_7^6\big]. 
 \end{eqnarray*}
  The equilibrium points of this BCN are $\{ \delta_7^2, \delta_7^3, \delta_7^6, \delta_7^7\}$. 
 We perform a single ($r=1$) offline experiment in the time interval $[0, T]$, with    $T= 12$,
  and we collect the following data: 
 \begin{eqnarray*}
 X_p &=&\left[\begin{array}{c|c|c|c|c|c|c|c|c|c|c|c|c}
 \delta_7^1 &  \delta_7^7 &  \delta_7^7 &  \delta_7^6 &  \delta_7^5 &  \delta_7^1&  \delta_7^6 &  \delta_7^5 &  \delta_7^1 & \delta_7^4 & \delta_7^3 & \delta_7^3 & \delta_7^2
 \end{array}\right], \\
  U_p &=& \left[\begin{array}{c|c|c|c|c|c|c|c|c|c|c|c|c}
  \delta_3^3 &   \delta_3^2 &   \delta_3^3 &   \delta_3^2 &   \delta_3^3 &  \delta_3^2 & \delta_3^2 & \delta_3^3 & \delta_3^1 & \delta_3^3 & \delta_3^2 & \delta_3^1& \delta_3^1 
 \end{array}\right], \\
  X_f &=&\left[\begin{array}{c|c|c|c|c|c|c|c|c|c|c|c|c}
\delta_7^7 &  \delta_7^7 &  \delta_7^6 &  \delta_7^5 &  \delta_7^1 &  \delta_7^6 &  \delta_7^5 &  \delta_7^1 & \delta_7^4 & \delta_7^3 & \delta_7^3 & \delta_7^2 & \delta_7^2
 \end{array}\right]. 
 \end{eqnarray*}
 The BCNs compatible with these data can be generically represented by the following matrix $\tilde L$
\begin{eqnarray*}
 \tilde L &=& \left[\begin{array}{c|c|c}
 \tilde L_1 & \tilde L_2 & \tilde L_3
 \end{array}\right]  = \big[
\delta_7^4 \ \delta_7^2 \ \delta_7^2 \ * \ * \ * \ * 
 \ | \ \\ 
&&|\ \delta_7^6 \ * \ \delta_7^3 \ * \ * \ \delta_7^5 \ \delta_7^7 \ | \ \delta_7^7 \ * \ * \ \delta_7^3 \ \delta_7^1 \ * \ \delta_7^6
\big]. 
 \end{eqnarray*}
 where the symbol $*$ stands for an arbitrary vector  (in ${\mathcal L}_7$). 
 {\color{black} Since 
 $$ L_{tot}^d = \begin{bmatrix}\delta^4_7+\delta^6_7+\delta^7_7 & \ \
 \delta^2_7 & \ \ \delta^2_7  + \delta^3_7 &\ \  \delta^3_7 &\ \  \delta^1_7 & \ \delta^5_7 &  \    \delta^6_7 +\delta^7_7 \end{bmatrix}$$}
 we obtain that the set of   the equilibrium points compatible with all the BCNs in $\mathcal B_d$ is 
 $
 \mathcal X_e^d = \{\delta_7^2, \delta_7^3, \delta_7^7\}. 
 $
\smallskip

Assume now that we want to implement a state feedback safe control with respect to the unsafe set $\mathcal X_u = \{\delta_7^3, \delta_7^4, \delta_7^7\}$. We need to understand if the data are informative for safe control with respect to $\mathcal X_u$, or, equivalently, if conditions i) and ii) of Theorem \ref{sol3} are satisfied. {\color{black} To this end, we can apply Algorithm \ref{alg2}, where in Step 1 we can readily verify that for every state in $\mathcal X_s = \mathcal L_7 \setminus \mathcal X_u = \{\delta_7^1, \delta_7^2, \delta_7^5, \delta_7^6\}$ there exists a transition ending in a state that is still inside $\mathcal X_s$, namely condition i) of Theorem \ref{sol3} holds. In Step 2, we instead examine condition ii) by employing Algorithm \ref{alg1} to determine the reachability of $\mathcal{X}_s$ in every BCN in $\mathcal{B}_d$. } We obtain: $\mathcal S_0 = \mathcal X_s$, $\mathcal S_1 = \{\delta_7^3, \delta_7^7\}$, and $\mathcal S_2 = \{\delta_7^4\}$, and hence $\mathcal S^* = \mathcal L_7$.  This means that also condition ii) of Theorem \ref{sol3} is satisfied, and thus the data are informative for safe control with respect to $\mathcal X_u$. Algorithm \ref{alg2} also provides the state feedback matrix $K = \begin{bmatrix}
 \delta_3^2 &  \delta_3^1 &  \delta_3^1 &  \delta_3^3 &  \delta_3^3 & \delta_3^2 & \delta_3^3 
 \end{bmatrix}$. 

 \end{example} 
 \bigskip \medskip

\section{Data-driven output regulation 
} \label{p2}

{\color{black} We now address the second problem considered in this paper, namely output regulation, which consists in designing  a  control input such that the output of a BCN 
becomes constant and equal to 
a
desired value, after a finite number of steps. More specifically, we 
pursue this goal by designing 
a control action in  the form of a state feedback. 
 \smallskip

   Before addressing the problem in a data-driven framework, we review its definition and model-based solution (see, e.g., \cite{EF_MEV_BCN_obs2012,ACC2014_BCN}).} 
 \smallskip
 
 \begin{definition} \label{outreg}
Given a BCN \eqref{BCNtot}, the {\em output regulation problem to $\y^*\in {\mathcal L}_P$ by state feedback} is said to be solvable if  there exists $K \in \mathcal L_{M\times N}$  such that
for every $\x(0)\in \mathcal L_N$,  $\y(t) = \y^*, \forall t \ge \tau$,  $\exists \tau\in \mathbb Z_+$,   when $\ub(t) = K \x(t).$ 
\end{definition}
 \medskip

 The first step of the model-based solution consists in finding all the states that generate the desired output $\y^*$, namely the states belonging to the set $\mathcal X(\y^*) \triangleq \{\delta_N^j : H\delta_N^j = \y^*\}$.  Once we have determined this set, we have to ensure that,    {\color{black} under some state feedback law $\ub(t) = K\x(t), \exists K \in \mathcal L_{M\times N}$,  the state evolution of the resulting BN, described by 
\begin{equation}
  \label{eq.feedback}  
\x(t+1) =   L_K \x(t),
\end{equation}
where $L_K \triangleq L \ltimes K \ltimes \Phi_N \in \mathcal L_{N\times N},
$ and $\Phi_N$ is the {\em power-reducing matrix} \cite{BCNCheng}, satisfying $\Phi_N {\bf x}(t)= {\bf x}(t)\ltimes {\bf x}(t)$,}
is eventually constrained within $\mathcal X(\y^*)$. 
This  imposes
that {\color{black}
all limit cycles (in particular, equilibrium points) of the BN
\eqref{eq.feedback} have states
 inside $\mathcal X(\y^*)$.}\\
{\color{black} Since the effect of state feedback on a BCN is that of selecting, for each state, only one of its possible successors, corresponding to a specific choice of the input, each limit cycle appearing in the BN \eqref{eq.feedback} corresponds to a periodic state trajectory of the BCN \eqref{BCNtot}. On the other hand, the periodic state trajectories of the BCN \eqref{BCNtot} coincide with the cycles appearing in the digraph ${\mathcal D}(L_{tot})$. So, if
we denote by $\mathcal C(\y^*)$ the set of  
all cycles  in ${\mathcal D}(L_{tot})$ whose nodes correspond to states  belonging to $\mathcal X(\y^*)$, we first need to    verify that  $\mathcal C(\y^*) \ne \emptyset$, and then}  that it is globally reachable, meaning that the union of the  basins of attraction of the states in  $\mathcal C(\y^*)$ covers the whole state space $\mathcal L_{N}$. 
\smallskip 

In the following proposition, we formalize the necessary and sufficient conditions for solving the output regulation problem by state feedback in a model-based set-up. 
\smallskip

\begin{proposition} \cite{ACC2014_BCN} \label{regcns}
Given a BCN as in \eqref{BCNtot} and a desired  output value  $\y^*\in {\mathcal L}_P$, the problem of output regulation to $\y^*$ by state feedback is solvable
if and only if
\begin{itemize}
\item[i)] $\mathcal C(\y^*) \ne \emptyset$;
\item[ii)] $\mathcal C(\y^*)$ is globally reachable, i.e., $\cup_{\x \in \mathcal C(\y^*)}  {\mathcal S}(\x) \!=\! \Lc_N$.
\end{itemize}
\end{proposition}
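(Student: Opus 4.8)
The plan is to reduce the output regulation problem under state feedback to the analysis of the asymptotic behaviour of the closed-loop Boolean network \eqref{eq.feedback}, with transition matrix $L_K = L \ltimes K \ltimes \Phi_N$, exploiting the fact that on the finite state set $\mathcal{L}_N$ every trajectory of an autonomous BN is eventually periodic and hence absorbed by a limit cycle (an equilibrium point being a unit-length cycle). The condition ``$\y(t) = \y^*$ for all $t \ge \tau$'' is equivalent to ``$\x(t) \in \mathcal{X}(\y^*)$ for all $t \ge \tau$'', so, since every trajectory ends in a limit cycle, output regulation from all initial states holds if and only if every limit cycle of $L_K$ lies inside $\mathcal{X}(\y^*)$ and every state feeds into one of these cycles. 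The bridge to $L_{tot}$ is the observation already recorded before the statement: the transition selected by a state feedback at $\x$ is $\x \to L_K\x = L\ltimes K\x \ltimes \x$, which is an admissible BCN transition; hence the limit cycles of $L_K$ are precisely cycles of $\mathcal{D}(L_{tot})$, and conversely any cycle of $\mathcal{D}(L_{tot})$ can be reproduced by an appropriate choice of inputs along it.

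For necessity I would assume a solving $K$ and fix an arbitrary $\x_0 \in \mathcal{L}_N$. Its closed-loop trajectory reaches a limit cycle of $L_K$, which by the correspondence above is a cycle of $\mathcal{D}(L_{tot})$. Since the output equals $\y^*$ for all $t \ge \tau$ along this trajectory, every node visited after $\tau$ lies in $\mathcal{X}(\y^*)$; as the whole cycle is visited infinitely often, all its nodes belong to $\mathcal{X}(\y^*)$, so the cycle is in $\mathcal{C}(\y^*)$. This yields $\mathcal{C}(\y^*) \ne \emptyset$, i.e.\ condition i). Moreover $\x_0$ reaches a state of this cycle through admissible BCN transitions, so $\x_0 \in \bigcup_{\x \in \mathcal{C}(\y^*)} \mathcal{S}(\x)$; since $\x_0$ is arbitrary, condition ii) follows.

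For sufficiency I would explicitly build $K$ from conditions i) and ii). Let $\mathcal{C}^* \subseteq \mathcal{X}(\y^*)$ be the union of the nodes of all cycles in $\mathcal{C}(\y^*)$. On each state of $\mathcal{C}^*$ I would define $K$ to select an input realizing the next edge along its cycle, which makes $\mathcal{C}^*$ invariant under $L_K$ and keeps the trajectory in $\mathcal{X}(\y^*)$ once inside. On every remaining state I would invoke the layered, breadth-first construction of Algorithm \ref{alg1} applied to the target set $\mathcal{C}^*$: condition ii) guarantees that every such state lies in the basin of attraction of $\mathcal{C}^*$, so it can be assigned an input that strictly decreases its distance to $\mathcal{C}^*$. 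Under the resulting $L_K$, every trajectory reaches $\mathcal{C}^*$ in finitely many steps and then remains there, whence $\y(t) = \y^*$ for all sufficiently large $t$.

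The step I expect to be the main obstacle is verifying, in the sufficiency direction, that the feedback assembled from the two regions creates no spurious limit cycle outside $\mathcal{X}(\y^*)$. The layered construction is exactly what rules this out: because the input chosen on a transient state strictly lowers the distance to $\mathcal{C}^*$, no cycle can close outside $\mathcal{C}^*$, so the only recurrent sets of $L_K$ are the cycles contained in $\mathcal{C}^*$. Making this distance-monotonicity argument precise, and checking that $K$ is single-valued and consistent across the two regions, is where the real care lies, whereas the limit-cycle/$L_{tot}$ correspondence and the necessity direction are essentially bookkeeping.
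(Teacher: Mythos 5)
Your proposal is correct and follows essentially the same route as the paper, which states this proposition as a cited result from \cite{ACC2014_BCN} and justifies it only through the preceding discussion: trajectories of the closed-loop BN \eqref{eq.feedback} are absorbed by limit cycles, these correspond to (elementary) cycles of $\mathcal D(L_{tot})$, and regulation holds iff such cycles exist inside $\mathcal X(\y^*)$ and their basins cover $\mathcal L_N$. Your added care about spurious limit cycles in the sufficiency direction is exactly the point the layered construction of Algorithm \ref{alg1} is designed to handle, so nothing is missing.
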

\medskip

{\color{black} The data-driven version of the output regulation problem is given below.}
\smallskip

\begin{problem} \label{probl2}
Given the data $(\x_d, \ub_d, \y_d)$, and some desired value $\y^*\in \Lc_P$ for the output, determine (if possible) a state feedback matrix $K\in\mathcal L_{M\times N}$ such that, by applying $\ub(t) = K \x(t)$, for every initial condition $\x(0)$, the output trajectories of all the BCNs 
compatible with the data satisfy $\y(t) = \y^*$ for every $t\ge \tau, \exists \tau\in \mathbb Z_+$. 
\end{problem}
 \smallskip

When Problem \ref{probl2} is solvable, we will say that the data $(\x_d, \ub_d, \y_d)$ are {\em informative for output regulation to} $\y^*\in \Lc_P$ {\em by state feedback}. 

{\color{black} In order to solve Problem \ref{probl2},  we first determine from data} the set of  states that generate the desired output $\y^*$ in {\em all BCNs in} ${\mathcal B}_d$, i.e.,
$$\mathcal X^d(\y^*) \!=\! \left\{\delta_N^j \in \Lc_{N} : \exists k \in [1,T] \ \text{s.t.}  \begin{bmatrix}
X_p \cr 
Y_p 
\end{bmatrix}\delta_T^k =\! \begin{bmatrix}
\delta_N^j \cr
\y^*
\end{bmatrix}\right\}\!.  
$$ 
To identify this set, we can proceed as in Algorithm \ref{alg3}. \medskip

\begin{algorithm}[h]
\caption{Construction of $\mathcal X^d(\y^*)$} \label{alg3} 
\smallskip
\textbf{Input:} - The data matrices $X_p$ and $Y_p$; \\
\hspace*{3.18em} - $\y^* \in \mathcal L_P$.\\
\textbf{Output:} The set $\mathcal X^d(\y^*)$ of all states  of the BCNs in \\
\hspace*{4.22em} ${\mathcal B}_d$ that generate the desired output $\y^*$. \smallskip \\
{\em Initialization:} Set $\mathcal X^d(\y^*) = \emptyset$. \\
\begin{enumerate}
    \item[\bf 1.] 
  {\color{black}  Compute 
    $$
    {\bf v}^\top = (\y^*)^\top Y_p \in \mathcal B^{1\times T},
    $$}
and note that $[{\bf v}]_k \ne 0$ if and only if $X_p\delta_T^k \in \mathcal X^d(\y^*)$.
    \item[\bf 2.] 
    Define $\mathcal K \triangleq \{k_1, k_2, \dots, k_y\} =\  ${\tt nonzero}(${\bf v}$)
\item[\bf 3.] {\tt if} $\mathcal K \ne \emptyset$, {\tt then} \\
\hspace*{1.4em}{\tt for} $k \in \mathcal K$, {\tt do} \\
\hspace*{2.3em} {\tt if} $X_p\delta_T^k \notin \mathcal X^d(\y^*)$, {\tt then} \\
\hspace*{3.6em} $\mathcal X^d(\y^*) \leftarrow \mathcal X^d(\y^*) \cup \{X_p\delta_T^k\}$ \\
\hspace*{2.6em}{\tt end if} \\
\hspace*{1.4em}{\tt end for} \\
{\tt end if}
\item[\bf 4.] Return $\mathcal X^d(\y^*)$. 
\end{enumerate}
 \medskip
\end{algorithm}

 At this point, we need to verify if 
 {\color{black}there exist cycles whose nodes belong to}
 $\mathcal X^d(\y^*)$. 
The first step 
consists in determining all the transitions appearing in the data
involving (both initial and final) states that belong to $\mathcal X^d(\y^*)$. Based on these transitions, we then construct a digraph, whose set of nodes is  $\mathcal X^d(\y^*)$, and with edges corresponding to the transitions within $\mathcal X^d(\y^*)$ identified from data. We call {\color{black} $L_{tot}^d(\y^*)$} the 
  Boolean (but not necessarily logical) matrix associated to such a digraph.
The matrix  $L_{tot}^d({\bf y}^*)$  represents the data-based estimate of  the principal submatrix of $L_{tot}$ in \eqref{ltot} obtained by selecting the rows and the  columns corresponding to 
the states in $\mathcal X^d(\y^*)$. 
{\color{black} In Algorithm \ref{alg4}, we exploit the well-know Johnson's Algorithm (see \cite{JohnsonAlg}) to  identify the set of all cycles  in ${\mathcal D}(L_{tot}^d({\bf y}^*))$, say $\mathcal C^d(\y^*)$.}
\medskip

\begin{algorithm}[h]
\caption{Cycles in $\mathcal D(L_{tot}^d(\y^*))$}
\label{alg4} 
\smallskip
\textbf{Input:} - The data matrix $X$; \\
\hspace*{3.18em} - The set $\mathcal X^d(\y^*) = \{ \delta^{i_1}_N, \dots, \delta^{i_s}_N\}$.\\
\textbf{Output:} The set $\mathcal C^d(\y^*)$ of all   cycles in $\mathcal D(L_{tot}^d(\y^*))$. \smallskip \\
{\em Initialization:} Set:
\begin{itemize}
\item $P \triangleq \begin{bmatrix}
\delta_N^{i_1} & \delta_N^{i_2} & \dots & \delta_N^{i_s}
\end{bmatrix} \in \Lc_{N\times s}
$;
\item $\tilde{\mathcal V} \triangleq \{i_1,i_2,\dots,i_s\} \subset [1,N]$;
\item $\mathcal C^d(\y^*) = \emptyset$;
\item $\mathcal D(L_{tot}^d({\bf y}^*)^i) \triangleq (\tilde{\mathcal V}, \tilde{\mathcal E})$,  with 
$\tilde{\mathcal E} = \emptyset$.
\end{itemize}
\smallskip 

\begin{enumerate}
    \item[\bf 1.] {\color{black} {\tt for} $(i,j) \in [1,s]\times [1,s]$, {\tt do} \\
\hspace*{1em} {\tt if} $\exists k \in [1, T]$ s.t. $[P^\top X_p]_{ik}[P^\top X_f]_{jk} \ne 0$, {\tt then} \\
\hspace*{2.8em} $\tilde{\mathcal E} \leftarrow \tilde{\mathcal E} \cup \{(i,j)\}$ \\
\hspace*{1em} {\tt end if} \\
{\tt end for}} 
 \item[\bf 2.] Apply Johnson's Algorithm in \cite{JohnsonAlg} to find all cycles in $\mathcal D(L_{tot}^d({\bf y}^*))$, say $\tilde{\mathcal C}_1, \dots, \tilde{\mathcal C}_{n_c}$,   $n_c \in \mathbb Z_+$. 
 \item[\bf 3.] Remap all the nodes of $\mathcal D(L_{tot}^d({\bf y}^*))$ to the corresponding nodes in $[1,N]$, by using the bijective correspondence given by the matrix $P$, i.e., $1\leftrightarrow i_1, \dots, s \leftrightarrow i_s$. 
\item[\bf 4.] Define $\mathcal C_1, \dots, \mathcal C_{n_c}$ the remapped   limit cycles.
\item[\bf 5.] Return $\mathcal C^d(\y^*)$.
\end{enumerate}

\end{algorithm}

For each edge $(i,j)$ belonging to some cycle  ${\mathcal C}_\ell$ in $\mathcal C^d(\y^*)$
it is also possible to determine one of the inputs associated with such transition within the cycle, say $\ub_i^\ell$. Indeed, if there exists an edge $(i,j)$, it means that the transition $i \to j$ is captured by the collected data, or, equivalently, there  exists $k \in [1,T]$ such that 
\be \label{lc_data}
\begin{bmatrix} 
X_p \cr 
X_f
\end{bmatrix} \delta_T^k = 
\begin{bmatrix}
\delta_N^i \cr 
\delta_N^j
\end{bmatrix}.
\ee 
Therefore, 
  the input associated to this transition within ${\mathcal C}_\ell$
is 
\be \label{lc_input}
\ub_i^\ell \triangleq U_p \delta_T^k.
\ee
%
%
 At this point   it only remains to establish if the set $\mathcal C^d(\y^*)$ is globally reachable.
 {\color{black} This can be verified using Algorithm \ref{alg1}, which additionally specifies the inputs to be applied to the states outside $\mathcal C^d(\y^*)$. } 
 \smallskip

We can  now provide the solution to Problem \ref{probl2}. 
 \smallskip



\begin{theorem} \label{sol2}
Problem \ref{probl2} is solvable (i.e., the data $(\x_d, \ub_d, \y_d)$ are informative for output regulation to $\y^*\in {\mathcal L}_P$ by state feedback)
if and only  if 
 the following two conditions hold:

\begin{itemize}
\item[i)] $\mathcal C^d(\y^*) \ne \emptyset$;
\item[ii)] $ { \mathcal S^*} \triangleq \bigcup_{\x \in \mathcal C^d(\y^*)} \mathcal S(\x) = \mathcal L_N$. 
\end{itemize}
\end{theorem}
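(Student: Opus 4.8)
The plan is to mirror the structure used for Theorem \ref{sol3}: recast Problem \ref{probl2} as the requirement that a single state feedback $K$ solve the model-based output regulation problem (Proposition \ref{regcns}) simultaneously for every BCN in $\mathcal B_d$, and then translate conditions i)--ii) of Proposition \ref{regcns} into their data-based counterparts. The pivotal observation I would isolate first is that, whenever $K$ assigns to a state $\delta_N^j$ an input $K\delta_N^j$ for which the pair $(\delta_N^j, K\delta_N^j)$ is data-revealed (i.e., appears as a column of $\begin{bmatrix} X_p \\ U_p\end{bmatrix}$), the successor $\tilde L \ltimes K\delta_N^j \ltimes \delta_N^j$ is the same for all $(\tilde L,\tilde H)\in\mathcal B_d$; consequently, a feedback built exclusively from data-revealed transitions produces a closed loop \eqref{eq.feedback} that is identical across $\mathcal B_d$ and whose digraph is a sub-digraph of $\mathcal D(L_{tot}^d)$ with out-degree one at every node. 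This reduces the uniform problem over $\mathcal B_d$ to the study of a single deterministic BN on the data graph.

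For sufficiency, assuming i) and ii), I would construct $K$ explicitly. For every state lying on a cycle of $\mathcal C^d(\y^*)$ I assign the input \eqref{lc_input} that keeps it on that cycle; for every state in $\mathcal S^*\setminus\mathcal C^d(\y^*)$ I assign the input returned by Algorithm \ref{alg1}, which strictly decreases its distance to $\mathcal C^d(\y^*)$. Condition ii) guarantees $\mathcal S^*=\mathcal L_N$, so $K$ is defined on every state, and condition i) guarantees the target cycles are nonempty. Since all selected transitions are data-revealed, the closed loop is the same deterministic BN for every BCN in $\mathcal B_d$; by the layering of Algorithm \ref{alg1} it has no recurrent behavior outside $\mathcal C^d(\y^*)$, so every trajectory enters $\mathcal C^d(\y^*)\subseteq\mathcal X^d(\y^*)$ in finitely many steps and remains there. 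As each state of $\mathcal X^d(\y^*)$ emits $\y^*$ in all BCNs of $\mathcal B_d$ (by the very definition of $\mathcal X^d(\y^*)$), we obtain $\y(t)=\y^*$ for all $t\ge\tau$, uniformly over $\mathcal B_d$.

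For necessity, assuming $K$ solves Problem \ref{probl2}, I would show conditions i)--ii) hold by arguing that the asymptotically recurrent part of the closed loop cannot rely on transitions that are not data-revealed. The mechanism is adversarial: if $K$ selected a non-data-revealed pair for a state that feeds the attracting cycle, then some $(\tilde L,\tilde H)\in\mathcal B_d$ could realize that free column so as to route the trajectory into a state whose output is (or can be set to) a value different from $\y^*$ and trap it there, violating regulation for that compatible BCN; in particular any state absent from $X_p$ can be made a self-loop with output $\neq\y^*$, so $\mathcal S^*=\mathcal L_N$ already forces every state to appear in $X_p$. Hence the limit cycles of the uniform closed loop must be cycles of $\mathcal D(L_{tot}^d)$ with all nodes in $\mathcal X^d(\y^*)$, i.e. elements of $\mathcal C^d(\y^*)$, which gives i); and global reachability of these cycles (Proposition \ref{regcns} ii)) translates, via the data-based reachability characterization of Proposition \ref{Ltot_data} and Algorithm \ref{alg1}, into $\mathcal S^*=\mathcal L_N$, which gives ii).

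I expect the main obstacle to be exactly this necessity step: making the adversarial construction airtight, i.e. proving that any feedback whose $\y^*$-producing recurrent set uses a non-certain transition admits a compatible BCN on which regulation provably fails. The delicate point is that an isolated wrong transition need not break regulation, since the output constraint $\y(t)=\y^*$ is only asymptotic, so the argument must exhibit a genuinely trapping bad cycle in some BCN of $\mathcal B_d$ rather than a merely transient excursion; establishing that such a BCN always exists when the recurrent structure is not confined to $\mathcal C^d(\y^*)$ is the crux of the proof.
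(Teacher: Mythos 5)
Your proposal follows essentially the same route as the paper: sufficiency by building $K$ from the data-revealed cycle inputs \eqref{lc_input} together with the inputs returned by Algorithm \ref{alg1} (which is exactly the construction the paper invokes via Proposition \ref{regcns} and the comments surrounding the theorem), and necessity by adversarially completing the unidentified columns of $\tilde L$ (and the free outputs) with self-loops and outputs different from $\y^*$ so as to exhibit a BCN in $\mathcal B_d$ violating Proposition \ref{regcns}. The ``crux'' you flag in the necessity step is addressed in the paper by precisely the self-loop construction you sketch, stated just as tersely.
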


\begin{proof}
(Necessity). Assume, by contradiction, that condition {\em i)} does not hold, i.e.,  $\mathcal C^d(\y^*) = \emptyset$. 
But this means that there exists at least one BCN in $\mathcal B_d$ for which $\mathcal C(\y^*) = \emptyset$,  and hence,   by Proposition \ref{regcns}, for that BCN the
 output regulation to $\y^*$ by state feedback is not possible.

  Assume now that $\mathcal C^d(\y^*) \ne \emptyset$, but {\em ii)} does not hold, namely $\mathcal S^* \ne \mathcal L_N$. 
We can construct a BCN in ${\mathcal B}_d$ for which 
 $\mathcal C(\y^*) = \mathcal C^d(\y^*)$ and 
$\bigcup_{\x \in \mathcal C(\y^*)} \mathcal S(\x) =  { \mathcal S^*} \ne \mathcal L_N$.
 This requires to select all the state transitions, equivalently  the columns of $\tilde L$, that are not uniquely identified by the data (and possibly also the corresponding output).
If there exists $\delta^r_N$ that does not appear as a column of $X_p$, then for every $i
\in [1,M]$ we impose $\tilde L_i \delta^r_N =  \delta^r_N$  and $H\delta^r_N \ne {\bf y}^*$.
On the other hand,  if
$\delta^r_N = X_p\delta^k_T$ for some $k\in [1,T]$, we define
 $\mathcal I_r$ the set of indices $i\in [1,M]$ such that $U_p\delta^k_T = \delta^i_M$.
If $\mathcal I_r \subsetneq [1,M]$, we impose    $\tilde L_i \delta^r_N =  \delta^r_N$ for every $i\not\in \mathcal I_r$.
Clearly, by Proposition \ref{regcns},  for each BCN obtained in this way 
the output regulation to $\y^*$ by state feedback is not possible. \\

(Sufficiency). It is a direct consequence of Proposition \ref{regcns} and following comments.  
%
\end{proof}
\medskip

{\color{black} As for Theorem \ref{sol3} in Section \ref{sec.SC}, Theorem \ref{sol2} only provides a method to verify, using data, whether the output regulation problem is solvable. However, it is not constructive, in the sense that it does not offer a systematic procedure for explicitly computing the solution, namely the feedback matrix $K \in \mathcal L_{M\times N}$. Nevertheless, the matrix $K$ can be readily obtained 
as
$$K = \begin{bmatrix}\ub_1 & \dots & \ub_N\end{bmatrix}$$
by selecting as control input $\ub_i$ the one defined in \eqref{lc_input} for the states $\delta^i_N$ belonging to a   cycle in $\mathcal C^d(\y^*)$, and using the input $\ub_i$ provided by Algorithm \ref{alg1} for all other states $\delta^i_N$. In cases where a state belongs to multiple  cycles in $\mathcal C^d(\y^*)$, it suffices to select the control input associated with any of these  cycles.     
\medskip

\begin{remark} \label{of_comp}
In this section, we have explored how to solve the output regulation problem via state feedback based on collected data. Clearly, this strategy cannot be implemented unless offline data about the state have been collected.
An alternative approach, first explored in \cite{BCNdd}, is to perform output regulation by means of output feedback. 
When so, one could try to achieve this goal by collecting only input and output data.
The solution proposed in \cite{BCNdd}, and
  inspired by results on the stabilization of Probabilistic Boolean networks, relies on an algorithm that first computes an approximation of the output prediction matrix, based on data, and then generates a sequence of concentric annular sets, each containing the output values that lie at a specific ``distance" from the desired value $\y^*$, with respect to the output/input/output transitions captured by the data-based output prediction matrix.  Finally, using these sets, the algorithm computes, when possible, an output feedback gain.  

However, output feedback strategies, which are demanding even in model-based settings, become hardly applicable in a data-driven framework, where only part of the network transitions is accessible.
\end{remark}
}
\medskip

\begin{example} \label{ex2}
Consider a BCN \eqref{BCNtot} with $N = 6$, $M = 3$, and $P = 2$, described by the following matrices 
\begin{eqnarray*}
L &=& \left[\begin{array}{c|c|c}
 L_1 & L_2 & L_3
 \end{array}\right] =  \left[\begin{array}{c|}
\delta_6^2 \ \delta_6^4 \ \delta_6^3 \ \delta_6^3 \ \delta_6^6   \ \delta_6^5 \end{array}\right. \\
&&\left.\begin{array}{|c|c} 
\delta_6^1 \ \delta_6^5 \ \delta_6^2 \ \delta_6^2 \ \delta_6^6  \ \delta_6^1 
 &
\delta_6^5 \ \delta_6^1 \ \delta_6^4 \ \delta_6^5 \ \delta_6^4  \ \delta_6^6  
 \end{array}\right], \\
 H &=& \left[\begin{array}{c}
 \delta_2^1 \ \delta_2^2 \ \delta_2^2 \ \delta_2^2 \ \delta_2^1 \ \delta_2^1
 \end{array}\right]. 
 \end{eqnarray*}
 Assume that we want to regulate the output of the network to the value $\y^* = \delta_2^2$. 
 We perform a single ($r=1$) offline experiment in the time interval $[0, 9]$, 
 and   collect the  data: 
 \begin{eqnarray*}
 X &=& \left[\begin{array}{c|c|c|c|c|c|c|c|c|c}
 \delta_6^6 &  \delta_6^6 &  \delta_6^1 &  \delta_6^2 &  \delta_6^5 &  \delta_6^4 & \delta_6^2 &  \delta_6^4 &  \delta_6^3 &  \delta_6^3
 \end{array}\right], \\
  U_p &=& \left[\begin{array}{c|c|c|c|c|c|c|c|c}
  \delta_3^3 &   \delta_3^2 &   \delta_3^1 &   \delta_3^2 &   \delta_3^3 &   \delta_3^2 &  \delta_3^1 &  \delta_3^1 &  \delta_3^1 
 \end{array}\right],  \\
Y_p &=& \left[\begin{array}{c|c|c|c|c|c|c|c|c}
\delta_2^1 & \delta_2^1 & \delta_2^1 & \delta_2^2 & \delta_2^1 & \delta_2^2 & \delta_2^2 & \delta_2^2 & \delta_2^2
 \end{array}\right]. 
 \end{eqnarray*}
 We deduce $
 \mathcal X^d(\y^*) = \{\delta_6^2, \delta_6^3, \delta_6^4\}. 
 $
 The BCNs compatible with these data can be  represented by the following matrix 
\begin{eqnarray*}
\tilde L &=& \left[\begin{array}{c|c|c}
 \tilde L_1 & \tilde L_2 & \tilde L_3
 \end{array}\right] =  \left[\begin{array}{c|}
\delta_6^2 \ \delta_6^4 \ \delta_6^3 \ \delta_6^3 \ *   \ * \end{array}\right. \\
&&\left.\begin{array}{|c|c} 
* \ \delta_6^5 \ * \ \delta_6^2 \ *  \ \delta_6^1 
 &
* \ * \ * \ * \ \delta_6^4  \ \delta_6^6  
 \end{array}\right],
 \end{eqnarray*}
 where, again, the symbol $*$ stands for an arbitrary vector (in $\Lc_6$). As  
  $X_p$ is of full row rank, the matrix $H$ can be uniquely recovered. Hence, by applying Algorithm \ref{alg4}, we obtain that the set of all     cycles   in $\mathcal D(L_{tot}^d(\y^*))$ 
 is \ $
 \mathcal C^d(\y^*) = \{
 3, 
 2 \leftrightarrow 
4\},
 $
  where   $ 2 \leftrightarrow 4$   is the  cycle of length $2$ involving $\delta_6^2$ and $\delta_6^4$. 
 We now check if the set  $\mathcal C^d(\y^*)$ is globally reachable.  By following the procedure outlined in Algorithm \ref{alg1}, we obtain $\mathcal S^* = \Lc_6$. 
 Thus, by Theorem \ref{sol2}, the data are informative for output regulation to $\y^*$ by state feedback. {\color{black} Building on the argument presented after Theorem \ref{sol2}, we obtain the following feedback matrix}
$$   
 K = \begin{bmatrix}
 \delta_3^1 &  \delta_3^1 &  \delta_3^1 &  \delta_3^2 &  \delta_3^3 & \delta_3^2 
 \end{bmatrix}.
 $$
 \end{example}

\section{Conclusion} \label{concl}
In this paper we investigated how the recently introduced data informativity approach \cite{informativity} can be adapted and extended to solve fundamental control problems in Boolean control networks. More specifically, we focused on safe control and output regulation problems, and we provided necessary and sufficient conditions, based only on  previously collected data (supposed not to be informative for identification), for their solvability. Moreover, we also obtained from data the feedback matrices expressions. We validated our results by means of examples.
{\color{black} It is worth noting that, due to the inherent complexity of these systems, BCN problems are NP-hard with respect to the state, input and output dimensions $n$, $m$ and $p$, or equivalently polynomial in the corresponding quantities $N = 2^n$, $M = 2^m$ and $P = 2^p$. Importantly, the computational complexity is independent of the chosen representation: in the case of interest, the algebraic one. Moreover, relying on data instead of an exact model does not substantially change the overall computational complexity. The only difference is that it also depends on the collected data, and in particular on the number of collected data $T$, which is, however, lower-bounded by $N$. On the other hand, since data provide only partial knowledge of the system dynamics, the number of identified transitions is smaller than those appearing in the real model, and consequently, if their complete enumeration is required, it imposes a lower computational burden.  Nonetheless, depending on the problem at hand, the algorithms can be easily adapted to terminate once a {\em single} solution is found, without the need to search for {\em all} feasible solutions.  }

  \bibliographystyle{agsm}       

\end{document}